\def\BibTeX{{\rm B\kern-.05em{\sc i\kern-.025em b}\kern-.08em
    T\kern-.1667em\lower.7ex\hbox{E}\kern-.125emX}}
\newtheorem{theorem}{Theorem}
\newtheorem{lemma}[theorem]{Lemma}
\newcolumntype{P}[1]{>{\centering\arraybackslash}p{#1}}
\newcolumntype{M}[1]{>{\centering\arraybackslash}m{#1}}
\def\bb0{{\mathbb{0}}}
\def\bb{{\mathbf{b}}}
\def\bff{{\mathbf{f}}}
\def\bh{{\mathbf{h}}}
\def\bx{{\mathbf{x}}}
\def\b0{{\mathbf{0}}}
\def\sfB{\mathsf{B}}
\def\sfN{\mathsf{N}}
\def\sfS{\mathsf{S}}
\def\sfT{\mathsf{T}}
\def\sfg{{\mathsf{g}}}
\def\sfj{{\mathsf{j}}}
\def\sfo{{\mathsf{o}}}
\def\sfr{{\mathsf{r}}}
\def\sf0{{\mathsf{0}}}
\def\rm0{{\mathrm{0}}}
\def\b0{{\pmb{0}}} 
\newcommand{\fdma}{\mathbf{f}_{\mathsf{dma}}}
\newcommand{\heff}{\mathbf{h}}
\newcommand{\hatt}{\mathbf{h}_{\mathsf{att}}}
\newcommand{\Rt}{R_{\mathsf{tune}}}
\newcommand{\Rtb}{B_{\mathsf{tune}}}
\newcommand{\Tt}{T_{\mathsf{temp}}}
\newcommand{\Aleak}{A_{\mathsf{leak}}}
\newcommand{\Wfill}{W_{\mathsf{fill}}}
\newcommand{\Fsel}{F_{\mathsf{freq}}}
\newcommand{\Wratio}{W_{\mathsf{ratio}}}
\newcommand{\Fcoup}{F_{\mathsf{coupl}}}
\newcommand{\Rres}{R_{\mathsf{res}}}
\newcommand{\hdma}{\mathbf{h}_{\mathsf{dma}}}
\newcommand{\am}{\alpha_{\mathsf{M}}}
\newcommand{\tam}{\bar{\alpha}_{\mathsf{M}}}
\newcommand{\wn}{w_{\nth}}
\newcommand{\btg}{\beta_{\mathsf{g}}}
\newcommand{\atg}{\bar{\alpha}_{\mathsf{g}}}
\newcommand{\tg}{\phi_{\mathsf{g}}}
\newcommand{\ft}{f_{\mathsf{t}}}
\newcommand{\Pin}{P_{\mathsf{in}}}
\newcommand{\Prad}{P_{\mathsf{rad}}}
\newcommand{\phio}{\varphi_{{\mathsf{o}}}}
\newcommand{\phit}{\phi_{\mathsf{t}}}
\newcommand{\Gp}{G_{{\mathsf{p}}}}
\newcommand{\Nt}{N_\mathsf{slot}}
\newcommand{\nth}{n}
\newcommand{\fr}{f_{\mathsf{r}}}
\newcommand{\dx}{d_{\mathsf{x}}}
\newcommand\figsizei{0.92}
\newcommand\figsizeii{0.92}
\newcommand{\blue}{}
\DeclareMathOperator*{\argmax}{argmax}
\DeclareMathOperator*{\argmin}{argmin}
\begin{document}

\title{Wideband dynamic metasurface antenna performance with practical design characteristics
}

\author{Joseph M. Carlson, \IEEEmembership{Student Member, IEEE}, Nitish V. Deshpande, \IEEEmembership{Student Member, IEEE}, \\ Miguel Rodrigo Castellanos, \IEEEmembership{Member, IEEE}, and Robert W. Heath, Jr., \IEEEmembership{Fellow, IEEE}

\thanks{Joseph M. Carlson, Nitish V. Deshpande, and Robert W.  Heath Jr. are with the 
	Department of Electrical and Computer Engineering, University of California San Diego, La Jolla, CA 92093 USA (e-mail: j4carlson@ucsd.edu; nideshpande@ucsd.edu; rwheathjr@ucsd.edu).
	Miguel Rodrigo Castellanos is with the Department of Electrical Engineering and Computer Science, University of Tennessee, Knoxville, TN 37996 USA (e-mail: mrcastellanos@utk.edu). This material is based upon work supported by the National Science Foundation under grant nos. NSF-ECCS-2435261, NSF-ECCS-2414678, NSF-CCF-2435254, and the Army Research Office under Grant W911NF2410107. A part of this paper has been accepted in the 2025 IEEE International Conference on Communications (ICC) \cite{j_icc}.}
}

\maketitle

\begin{abstract}
  Dynamic metasurface antennas (DMA) provide low-power beamforming through reconfigurable radiative slots. Each slot has a tunable component that consumes low power compared to typical analog components like phase shifters. This makes DMAs a potential candidate to minimize the power consumption of multiple-input multiple-output (MIMO) antenna arrays. In this paper, we investigate the use of DMAs in a wideband communication setting with practical DMA design characteristics. We develop approximations for the DMA beamforming gain that account for the effects of waveguide attenuation, element frequency-selectivity, and limited reconfigurability of the tunable components as a function of the signal bandwidth. The approximations allow for key insights into the wideband performance of DMAs in terms of different design variables. We develop a simple successive beamforming algorithm to improve the wideband performance of DMAs by sequentially configuring each DMA element. Simulation results for a line-of-sight (LOS) wideband system show the accuracy of the approximations with the simulated DMA model in terms of spectral efficiency. We also find that the proposed successive beamforming algorithm increases the overall spectral efficiency of the DMA-based wideband system compared with a baseline DMA beamforming method.
\end{abstract}

\begin{IEEEkeywords}
Dynamic metasurface antenna, wideband beamforming, MIMO, energy efficiency
\end{IEEEkeywords}

\section{Introduction}\label{sec: intro}

Dynamic metasurface antennas (DMAs) are a type of tunable leaky-wave antenna that gradually radiates power out through reconfigurable slots in a waveguide \cite{smith2017analysis}. \blue{DMAs differ from reflective or transmissive metasurfaces in that they operate as active antennas with a dedicated radio frequency (RF) feed, and can be integrated into a wireless transmitter or receiver architecture as an antenna array.} Low-power, low-cost tunable components are integrated into the DMA elements to control their resonant frequency and enable beamforming. Through the reconfigurable components, DMAs provide a potential solution to minimize both power consumption and hardware costs in multiple-input multiple-output (MIMO) systems when compared with phase shifter or digital beamforming methods \cite{shlezinger2021dynamic}. DMAs also have favorable wideband characteristics over static antennas due to the ability to tune the resonant frequency of each DMA element. Despite this, prior work has focused largely on narrowband communications for DMAs. In our work, we expand upon the current narrowband DMA models to evaluate the beamforming gain of DMAs with large signal bandwidths. 

\subsection{Prior work}

DMAs have been studied extensively in the past five years due to their potential power consumption and cost benefits over digital and phased arrays. The concept of metasurface antennas stems from using the electrically-small resonators in metasurface designs, such as split-ring-resonators \cite{wang2020metantenna}, as active antennas \cite{faenzi2019metasurface,badawe2016true}. The extension to dynamic metasurface antennas involves the integration of reconfigurable components into the metasurface antenna design to dynamically tune the antenna characteristics \cite{smith2017analysis,boyarsky2021electronically,SleasmanEtAlWaveguideFedTunableMetamaterialElement2016}. DMA designs have been proposed in literature using both varactor diodes \cite{boyarsky2021electronically} and p-i-n diodes \cite{lin2020high} to actively reconfigure the metasurface antenna with continuous or binary tuning capabilities. Both metasurface designs in \cite{boyarsky2021electronically} and \cite{lin2020high} were fabricated and experimentally measured to obtain steered beam patterns using the reconfigurable components. In essence, the experimental work shows that DMAs provide a novel and practical beamsteering architecture with reduced power consumption compared to digital and phased arrays, making them a potentially interesting technology for cellular systems.

There are several papers that have incorporated DMAs into more elaborate MIMO configurations. Simple beamforming strategies have been developed to map desired antenna weights to the Lorentzian-constrained weights \cite{smith2017analysis,SleasmanEtAlWaveguideFedTunableMetamaterialElement2016} and binary weights \cite{deng2022reconfigurable,lin2020high,hwang2020binary,pan2020design} to mimic a line-of-sight (LOS) beam pattern. More complex algorithms to configure the DMA Lorentzian weights have been developed to maximize spectral efficiency \cite{ShlezingerEtAlDynamicMetasurfaceAntennasUplink2019,WangJointTransmitterReceiverDesign2022,HuangEtAlStructuredOFDMModulationXLMIMO2025}, minimize bit-error rate in MIMO-orthogonal frequency division multiplexing (OFDM) systems \cite{WangEtAlDynamicMetasurfaceAntennasMIMOOFDM2021}, and maximize massive MIMO energy efficiency \cite{YouEtAlEnergyEfficiencyMaximizationMassive2022}. Extensions of the work in \cite{ShlezingerEtAlDynamicMetasurfaceAntennasUplink2019,WangJointTransmitterReceiverDesign2022,HuangEtAlStructuredOFDMModulationXLMIMO2025, WangEtAlDynamicMetasurfaceAntennasMIMOOFDM2021, YouEtAlEnergyEfficiencyMaximizationMassive2022} consider a hybrid precoding architecture with a DMA-based analog precoder and a digital precoder with limited RF chains \cite{ZhangEtAlBeamFocusingNearFieldMultiuser2022,KimaryoLeeDownlinkBeamformingDynamicMetasurface2023,AzarbahramEtAlEnergyBeamformingRFWireless2023,HuangEtAlJointMicrostripSelectionBeamforming2023}. Despite the Lorentzian weight constraint, DMA-based systems have been shown to provide spectral efficiency that approaches what can be achieved with typical phased arrays.

A common limitation of the work in \cite{ShlezingerEtAlDynamicMetasurfaceAntennasUplink2019,WangJointTransmitterReceiverDesign2022,YouEtAlEnergyEfficiencyMaximizationMassive2022,ZhangEtAlBeamFocusingNearFieldMultiuser2022,KimaryoLeeDownlinkBeamformingDynamicMetasurface2023,AzarbahramEtAlEnergyBeamformingRFWireless2023,HuangEtAlJointMicrostripSelectionBeamforming2023} is that the DMA element is modeled solely by the Lorentzian-constrained antenna weights, which does not account for all of the design features of DMAs that make them different from static antennas. Key aspects of the DMA model that are neglected in \cite{ShlezingerEtAlDynamicMetasurfaceAntennasUplink2019,WangJointTransmitterReceiverDesign2022,YouEtAlEnergyEfficiencyMaximizationMassive2022,ZhangEtAlBeamFocusingNearFieldMultiuser2022,KimaryoLeeDownlinkBeamformingDynamicMetasurface2023,AzarbahramEtAlEnergyBeamformingRFWireless2023,HuangEtAlJointMicrostripSelectionBeamforming2023} include the frequency-selectivity in the DMA element resonance, minimal impact of the waveguide attenuation as power is leaked out through the DMA elements, and the range of tunability for the DMA element resonant frequency. All three of these neglected assumptions will affect DMA beamforming, but the extent is unclear in current literature. Moreover, the work in \cite{ShlezingerEtAlDynamicMetasurfaceAntennasUplink2019,WangJointTransmitterReceiverDesign2022,YouEtAlEnergyEfficiencyMaximizationMassive2022,ZhangEtAlBeamFocusingNearFieldMultiuser2022,KimaryoLeeDownlinkBeamformingDynamicMetasurface2023,AzarbahramEtAlEnergyBeamformingRFWireless2023,HuangEtAlJointMicrostripSelectionBeamforming2023} largely assumes a narrowband system model, where the fractional bandwidth is small enough such that frequency-selective DMA effects are negligible. \blue{While work in \cite{HuangEtAlStructuredOFDMModulationXLMIMO2025} and \cite{WangEtAlDynamicMetasurfaceAntennasMIMOOFDM2021} includes the frequency-selective DMA resonance in the DMA model, the waveguide attenuation and resonant frequency tunability are not considered and could significantly impact the resulting spectral efficiency.} Since the DMA element resonance, waveguide attenuation, and resonant frequency tunability are all frequency-selective, increasing the signal bandwidth to consider a wideband wireless communication system has not been explored with these practical DMA design characteristics.

Design work on DMAs has provided a more complete characterization of their behavior. Foundational modeling of DMAs was done in \cite{Pulido-ManceraEtAlPolarizabilityExtractionComplementaryMetamaterial2017a,ScherKuesterExtractingBulkEffectiveParameters2009,pulido2018analytical_dissertation} to relate DMA elements to magnetic dipoles and show the frequency-selectivity resonant response of DMAs. The work in \cite{Pulido-ManceraEtAlPolarizabilityExtractionComplementaryMetamaterial2017a,ScherKuesterExtractingBulkEffectiveParameters2009,pulido2018analytical_dissertation}, however, does not explore the frequency-selective characteristics of DMAs in the context of communications and beamforming. Additional DMA models in \cite{j_asilomar_impedance,smith2017analysis,boyarsky2021electronically,boyarsky2020grating} were developed to account for waveguide attenuation and impedance in the context of beamforming, but did not provide a thorough analysis for how to design the DMA for a specific waveguide attenuation or how different attenuation values impact the resulting DMA beamforming. Moreover, all of the studies in \cite{Pulido-ManceraEtAlPolarizabilityExtractionComplementaryMetamaterial2017a,ScherKuesterExtractingBulkEffectiveParameters2009,pulido2018analytical_dissertation,smith2017analysis,boyarsky2021electronically,boyarsky2020grating} do not discuss the impact of the DMA resonant frequency tuning abilities on DMA beamforming. In a conference version of this paper, we include the resonant frequency tuning, waveguide attenuation and frequency-selectivity in the DMA model, but do not analyze how these parameters impact the DMA beamforming gain for large bandwidths \cite{j_icc}. In our paper, we will close this gap and analyze the affects of the DMA element resonance, waveguide attenuation, and resonant frequency tunability on the DMA beamforming gain for large signal bandwidths, where the DMA frequency-selectivity cannot be neglected. 

\blue{To highlight the challenges addressed in our paper, it is important to understand the impact of DMA element frequency-selectivity, waveguide attenuation, and resonant frequency tunability on the DMA wideband beamforming gain. For example, the frequency-selectivity of the DMA element resonant response causes the amplitude and phase of the effective DMA beamforming weights to vary drastically across frequency, and the waveguide attenuation applies a frequency-selective amplitude taper across all the DMA beamforming weights. The resonant frequency tunability also places limits on the possible DMA beamforming weights, which adds to the difficulty in obtaining a tractable analysis of the DMA wideband beamforming gain. Overall, the general relationship for DMA wideband beamforming gain as a function of frequency is not clear, and configuring the DMA beamforming weights to improve wideband beamforming gain with these practical design characteristics remains an open challenge. We address these challenges in our paper by deriving a DMA wideband beamforming gain approximation based on the impact of the frequency-selective DMA resonance, waveguide attenuation, and resonant frequency tunability, and we propose a successive beamforming algorithm to improve the DMA wideband spectral efficiency under the practical design characteristics.}

\subsection{Contributions}

We analyze the impact of DMA characteristics, such as resonant frequency tunability and waveguide attenuation, on the beamforming gain and spectral efficiency in a multiple-input single-output (MISO) DMA-analog wireless communication system with large signal bandwidths. The key contributions in this paper are summarized as follows. We develop a wideband DMA signal model that includes the waveguide attenuation, resonant frequency tunability, and DMA resonance frequency-selectivity. These three DMA design characteristics have a significant impact on the resulting DMA beamforming gain as the signal bandwidth increases, yet there has been no comprehensive analysis in prior work that integrates these DMA characteristics into a signal model. Next, we derive an approximation of the beamforming gain for a wideband MISO DMA system under the influence of the waveguide attenuation, limited resonant frequency tunability, and frequency-selective elements. The approximation provides a closed-form expression for the resulting beamforming gain of the DMA as a function of the signal bandwidth and DMA design parameters, such as the coupling and damping factors, as well as the DMA array spacing. Therefore, the design parameters for any physical DMA can be applied to the approximation to obtain an accurate estimate on the achievable beamforming gain and spectral efficiency in a DMA-based wireless system.

Using the derived approximation, we gain valuable insights into the best DMA design characteristics to maximize the MISO spectral efficiency for small and large signal bandwidths. We show that the limited resonant frequency tunability decreases spectral efficiency by reducing the available beamforming weights, meaning the range of resonant frequency tunability is a crucial DMA design characteristic for DMA-based wireless systems. \blue{We also demonstrate the need for small antenna element spacing to minimize the impact of high frequency-selectivity due to the wireless and waveguide channels, as well as the DMA resonance frequency-selectivity. Moreover, in simulation results we show that large damping factors generally lead to higher data rates since there is less frequency-selectivity in the DMA element response.} Finally, we propose a successive beamforming algorithm to configure the DMA elements. The proposed successive beamforming algorithm accounts for both the frequency-selective DMA design characteristics and the configurations of other DMA elements to improve the wideband spectral efficiency. We find that the proposed algorithm provides an increase in spectral efficiency and data rates when compared with a baseline DMA beamforming approach, especially as the signal bandwidth becomes large. \blue{We also demonstrate that the proposed successive beamforming algorithm outperforms the baseline beamforming algorithm and hybrid beamforming architectures when extended to a multipath channel environment, which shows that the proposed algorithm functions well in complex channel scenarios beyond the initial LOS case.}

{\textit{Organization}}: In Section \ref{sec: dma model}, we outline the wideband MISO-OFDM signal model. In Section \ref{sec: DMA channel and beamformer}, we define the frequency-selective DMA model and waveguide propagation model. We also introduce a novel DMA parameter called the tuning bandwidth to account for the ability to tune the DMA resonant frequency. In Section \ref{sec: dma approx}, we derive a beamforming gain approximation to analyze spectral efficiency. The beamforming gain approximation includes the effects of the DMA element frequency-selectivity, tuning bandwidth, and waveguide attenuation unlike prior work in \cite{ShlezingerEtAlDynamicMetasurfaceAntennasUplink2019,WangJointTransmitterReceiverDesign2022,YouEtAlEnergyEfficiencyMaximizationMassive2022,ZhangEtAlBeamFocusingNearFieldMultiuser2022,KimaryoLeeDownlinkBeamformingDynamicMetasurface2023,AzarbahramEtAlEnergyBeamformingRFWireless2023,HuangEtAlJointMicrostripSelectionBeamforming2023}. Next, in Section \ref{sec: successive} we define a baseline DMA algorithm for configuring the antenna weights based on \cite{smith2017analysis} and propose a successive algorithm to increase the DMA wideband spectral efficiency. Unlike \cite{smith2017analysis,boyarsky2021electronically}, the proposed beamforming algorithm is frequency-selective and maximizes the spectral efficiency across a multi-carrier system, rather than simply at a center frequency. In Section \ref{sec: results}, we then show simulation results to compare the proposed successive algorithm, baseline DMA algorithm and beamforming gain approximation. The simulation results extend the DMA modeling work developed in \cite{Pulido-ManceraEtAlPolarizabilityExtractionComplementaryMetamaterial2017a,ScherKuesterExtractingBulkEffectiveParameters2009,pulido2018analytical_dissertation} to a wireless setting, where we convey how different DMA design parameters can be optimized to improve spectral efficiency.
Lastly, in Section \ref{sec: conclusion} we provide a summary of the paper and future research directions.

{\textit{Notation}}: We denote a bold, capital letter $ \mathbf{A} $ as a matrix, a bold, lowercase letter $ \mathbf{a} $ as a vector, and a script letter $ \mathcal{A} $ as a set. Let $\mathbf{A}^*$ represent the matrix conjugate transpose, $\mathbf{A}^{\mathsf{c}}$ represent the matrix conjugate, and $\mathbf{A}^{\mathsf{T}}$ represent the matrix transpose. For a complex number $a$, we use $\operatorname{Re}(a)$ and $\operatorname{Im}(a)$ to indicate its real and imaginary part, $|a|$ as its magnitude, and $\angle a$ as its phase. We define the operator $\odot$ as the Hadamard product. We denote $||\mathbf{a}||_\mathsf{F}$ as the Frobenius norm of vector $\mathbf{a}$. We use $O(\cdot)$ to denote the big O notation.

\section{System model}\label{sec: dma model}

We first establish a frequency-selective MISO-OFDM signal model for a DMA-based wireless system. We also define the key metrics for evaluating the DMA performance as the wideband beamforming gain and spectral efficiency.

\subsection{MISO-OFDM signal model}

The system consists of a uniform linear array (ULA) DMA transmitter with $\Nt$ elements communicating with a single-antenna user through a wideband MISO-OFDM channel. Fig. \ref{fig: dma_array} shows the DMA ULA architecture with $\Nt$ radiating elements atop a waveguide. Our goal is to analyze the effects of the frequency-selective DMA elements on the overall MISO-OFDM data rates and spectral efficiency. We consider a DMA model with continuous tunable components, such as varactors, that are configured to beamform by adjusting the components to create a resonant response in the DMA element.

We now define the MISO-OFDM signal model. The DMA transmitter communicates with the user over $K$ subcarriers by precoding the $k$th subcarrier transmit symbol $s[k]$, which has zero-mean unit-variance.
Let $\fdma[k] \in \mathbb{C}^{\Nt \times 1}$ be the DMA transmit beamforming vector for the $k$th subcarrier and $\mathbf{h}_\mathsf{att}[k] \in \mathbb{R}^{\Nt \times 1}$ be an amplitude taper on the DMA beamforming weights based on the leakage of radiated power. This is discussed in further detail in Section \ref{sec: channel}. The  transmit power input to the DMA is $\Pin$ for all subcarriers, such that for a total system input power $P_{\mathsf{in,tot}}$, the per-subcarrier input power is $\Pin = \frac{P_{\mathsf{in,tot}}}{K}$.  Let $G_{\mathsf{dma}}$ be a DMA loss term that accounts for the DMA efficiency, and let $M_k$ be a normalization term that satisfies a power constraint on the DMA beamformer.
The transmit signal vector is denoted as 
\begin{align}
	\bx[k]= \sqrt{\Pin  G_{\mathsf{dma}} M_k} (\fdma[k] \odot \mathbf{h}_\mathsf{att}[k]) s[k].
\end{align}
 We assume Gaussian-distributed noise $n[k] \sim \mathcal{N}_{\mathbb{C}}(0,\sigma_\mathsf{N}^2)$.  
 
 Next, we define the received signal model for the DMA. Let $\heff[k] \in \mathbb{C}^{\Nt \times 1}$ be the small-scale  fading channel between the DMA and user receive antenna. Let $G_{\mathsf{P},k}$ be the term which captures the frequency dependent large-scale fading path loss, which is further defined in Section \ref{subsec: snr and se}. The MISO-OFDM single-stream received signal for a single time slot  is
\begin{equation}
    y[k] = \sqrt{G_{\mathsf{P},k}} \mathbf{h}^{\mathsf{T}}[k]\bx[k] + n[k]. 
\end{equation}
The characteristics of the DMA element are incorporated into the beamformer $\fdma[k]$, while properties of the waveguide are integrated into the channel $\heff[k]$ and the leakage vector $\hatt[k]$.

\begin{figure}
	\centering
	\includegraphics[width=\figsizei\linewidth]{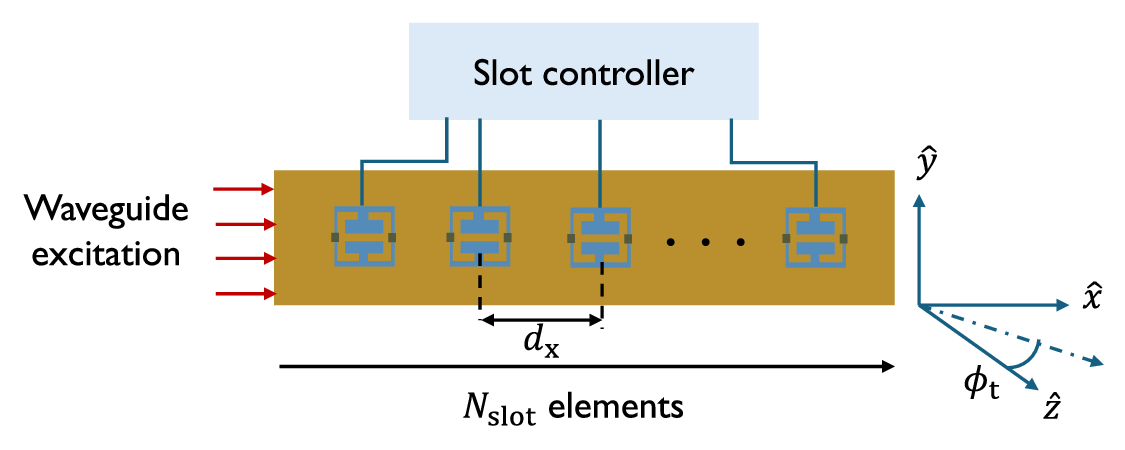}
	\caption{\blue{Architecture of a DMA as a leaky-wave uniform linear array of antenna elements. Varactor diodes are integrated into each antenna element to enable low-power beamforming with the use of an external controller.}}
	\label{fig: dma_array}
\end{figure}

\subsection{Wideband beamforming gain and spectral efficiency}\label{subsec: snr and se}

We now describe the signal-to-noise ratio (SNR)  model that incorporates the signal bandwidth of the DMA wideband system, and the performance metrics as the wideband beamforming gain and spectral efficiency. The large-scale fading path loss is assumed to be free space path loss and expressed as $G_{\mathsf{P},k}= \left(\frac{c}{f_k 4 \pi r}\right)^2$  for a distance $r$ from the DMA. Let $\sigma_{\sfN}^2 = k_{\sfB}\Tt B/K$ be the noise power for the Boltzmann constant $k_{\sfB}$, temperature $\Tt$, and subcarrier bandwidth $B/K$. Then, we define the per-subcarrier SNR with the DMA as
\begin{equation}\label{eq: snr}
	\rho_k = G_{\mathsf{P},k}G_{\mathsf{dma}}\frac{\Pin }{\sigma^2_{\sfN}}.
\end{equation}
We see that the SNR in \eqref{eq: snr} depends on the subcarrier bandwidth $B/K$ within the noise power term $\sigma_{\sfN}^2$, which we will examine in Section \ref{sec: successive results}.

Next we describe the key performance metrics to evaluate the DMA beamformer. We define the sum beamforming gain for the system across all subcarriers as
\begin{equation}
    G_\mathsf{sum} = \sum\limits_{k=1}^{K} M_k|\heff^{\mathsf{T}}[k] \left(\fdma[k] \odot \mathbf{h}_{\mathsf{att}}[k] \right)|^2
\end{equation}
and the spectral efficiency as
\begin{multline}\label{eq: spec eff}
	C\Big(\{ \rho_k, \heff[k]\}_{k=1}^K \Big) = \\ \frac{1}{K} \sum\limits_{k=1}^K \log_2 \left( 1+ \rho_k M_k |\heff^{\mathsf{T}}[k] \left(\fdma[k] \odot \mathbf{h}_{\mathsf{att}}[k] \right)|^2 \right).
\end{multline}
The resulting sum beamforming gain and spectral efficiency depend on a multitude of DMA design parameters: the resonant frequency tuning bandwidth, waveguide power leakage, element frequency-selectivity, and system parameters like the signal bandwidth. We will derive an expression to approximate the DMA sum beamforming gain, and extend the approximation to investigate how spectral efficiency changes as a function of the DMA design parameters.

\section{DMA channel and beamformer model}\label{sec: DMA channel and beamformer}

We now define the LOS DMA channel model that incorporates the DMA waveguide propagation effects and the wireless channel. We also model the DMA beamformer by the reconfigurable magnetic dipole response of each element, and derive a frequency-selective Lorentzian-constraint to extend the prior DMA weight model in \cite{smith2017analysis} to a wideband setting.

\subsection{Channel and waveguide model}\label{sec: channel}

The channel between the transmitter and the receiver includes the effects of over-the-air propagation and the DMA waveguide. We assume a LOS channel model to simplify the wideband system analysis. \blue{Let $f$ be the signal frequency}, let $f_k$ be the $k$th subcarrier passband frequency, let $c$ indicate the speed of light in a vacuum, let $\dx$ be the inter-element spacing, and let $\phit$ be the LOS direction of the user from broadside, as shown in Fig. \ref{fig: dma_array}. We assume the DMA is oriented in the $xy$ plane and the DMA elements are spaced along the $x$ direction. The transmit array response vector is
\begin{equation}\label{eq: steering vec}
	\mathbf{a}[\phit, k] = \left[e^{\sfj0 \left(\frac{2\pi f_k}{c}\right) \dx \sin \phit}, \ldots, e^{\sfj (\Nt-1) \left(\frac{2\pi f_k}{c}\right) \dx \sin \phit}\right]^{\sfT}.
\end{equation}
In addition to the transmit array vector, the DMA waveguide also produces attenuation and phase shift effects on the DMA elements due to the propagation of the transmit signal through the waveguide. 

Since DMAs act as reconfigurable leaky-wave antennas, the feed for the individual DMA elements is the waveguide. The electromagnetic fields attenuate along the waveguide as power is radiated out through the elements. There is an inherent phase advance that is introduced for each DMA element from the propagation of the electromagnetic fields. We denote the  waveguide phase constant as $\beta_{\mathsf{g}}(f)$.
	Similar to \cite{smith2017analysis}, we ignore the resistive loss and only model the  waveguide attenuation due to radiated power leakage which is denoted as $\bar{\alpha}_{\mathsf{g}}(f)$.  The subscript $\sfg$ represents the waveguide. Assuming zero phase angle at the first element,  we define the DMA waveguide channel that incorporates the phase advance as \cite{smith2017analysis}
\begin{equation}\label{eq: dma waveguide channel}
	\hdma[k] = \left[ e^{-\sfj (0) \dx  \btg(f_k) }, \ldots, e^{- \sfj(\Nt-1) \dx  \btg(f_k)} \right]^\sfT,
\end{equation}
The total wireless channel is obtained by combining the array vector $\mathbf{a}[\phit,k]$ with the effects of the DMA waveguide channel $\hdma[k]$ as
\begin{equation}\label{eq: eff channel}
	\heff[k] = \mathbf{a}[\phit,k] \odot \hdma[k].
\end{equation}
The expression in \eqref{eq: eff channel} describes the total array response vector including both wireless and waveguide channel propagation.

Additionally, we define a waveguide attenuation vector that describes the leakage of radiated power along the length of the waveguide. As the specific configuration of DMA elements impacts the leakage of radiated power, we define a leakage constant $\bar{\alpha}_{\mathsf{g}}(f)$ that approximates the attenuation of the waveguide fields as power is radiated out through the DMA elements. We provide further justification of using an approximate leakage constant in Section \ref{sec: results}. We define the leakage vector as
\begin{equation}\label{eq: dma atten vector}
	\mathbf{h}_\mathsf{att}[k] = \left[ e^{-(0) \dx \atg(f_k) }, \ldots, e^{-(\Nt-1) \dx \atg(f_k)} \right]^\sfT.
\end{equation}
We provide separate expressions for the waveguide effects of phase propagation and leakage because of their impact on the DMA beamformer. The DMA phase propagation channel in \eqref{eq: dma waveguide channel} contributes a phase advance to the wireless channel that must be accounted for and counteracted by the DMA beamformer to steer a beam in a desired direction. The DMA leakage vector in \eqref{eq: dma atten vector}, however, applies an effective amplitude taper on the resulting DMA beamforming weights $\fdma[k]$. For now, we focus on configuring the DMA beamforming weights prior to incorporating the amplitude taper from the leakage vector in \eqref{eq: dma atten vector} for simplicity. Later on, we propose a novel method to configure the DMA beamforming weights in Section \ref{subsec: successive alg} which accounts for the amplitude taper.

\subsection{Frequency-selective DMA beamforming  model}\label{subsec: mag pol}

We establish a model for the tunable responses of the DMA elements and incorporate additional physical DMA effects into the DMA beamforming model. 
The polarizable dipole framework from \cite{smith2017analysis} is used to model the DMA  beamforming similar to prior work\cite{carlson2023hierarchical, nitish_qif1}.
The radiation properties of each DMA slot resembles
that of magnetic dipoles as was experimentally verified in \cite{yoo2022experimental}. The radiated field is governed by the  magnetic polarizability of each dipole\cite{smith2017analysis}. Following the notation from \cite{smith2017analysis},  let $\Fcoup$ be the coupling factor, $\Gamma$ be the damping factor, and $f_{\sfr, \nth}$ be the tunable element resonant frequency of the $\nth$th slot. The DMA magnetic polarizability as a function of the subcarrier frequency and  the $\nth$th element resonant frequency is \cite{smith2017analysis}
\begin{equation}\label{eq: mag pol}
    \am(f_k, f_{\sfr, \nth}) = \frac{2\pi f_k^2 \Fcoup}{2\pi f_{\sfr,\nth}^2-2\pi f_k^2+\sfj\Gamma f_k}.
\end{equation}
Since \eqref{eq: mag pol} depends on both $ f_{\sfr, \nth}$ and $f_k$, the frequency-selective resonant response is dependent on the resonant frequency and the signal bandwidth $B$ that contains the subcarrier frequencies $f_k$.

Next, we define a tuning bandwidth term to model the physical limitations and tuning capabilities of the reconfigurable varactors. For a minimum resonant frequency $f_{\sfr,{\mathsf{min}}}$ and maximum resonant frequency $f_{\sfr,{\mathsf{max}}}$, the resonant frequency tuning range is $\Rt \in [f_{\sfr,{\mathsf{min}}},f_{\sfr,{\mathsf{max}}}]$. We define the tuning bandwidth as
\begin{equation}
    \Rtb = f_{\sfr,{\mathsf{max}}}-f_{\sfr,{\mathsf{min}}}
\end{equation}
which represents the bandwidth of possible resonant frequency tuning values $\fr$ around the center operating frequency $\ft$. 

The tuning bandwidth $\Rtb$ has a large impact on the resulting spectral efficiency as it limits the flexibility of the DMA elements to create different antenna weights. Several design parameters affect the tuning bandwidth: DMA element geometry and varactor diode parameters. We discuss this further in Section \ref{sec: dma approx}. In this paper, we assume the tuning bandwidth can be adjusted based on a DMA design, such as through altering the DMA geometry or selecting different varactor diodes \cite{lin2020high}. We will analyze the importance of large and small tuning bandwidths on the DMA wideband performance.

The tunable element responses allow the DMA to achieve beamforming based on the tuning configuration. The magnetic polarizability then becomes the effective antenna weight.
We define the quality factor as $Q_k = \frac{2\pi f_{k}}{ \Gamma}$ and the normalized magnetic polarizability response as 
\begin{equation}\label{eq: mag pol norm}
    \tam(f_k,f_{\sfr, \nth}) = \frac{1}{Q_k F}\am(f_k,f_{\sfr, \nth}).
\end{equation}
The factor $Q_k F$ ensures that the highest magnitude beamforming weight possible is one, which maintains consistency with the Lorentzian model for the DMA weights. We can then leverage the beamforming methods for DMAs using the Lorentzian model to map unit-amplitude weights onto a frequency-selective Lorentzian-constraint. The transmit beamforming vector for the DMA based on the  varactor tunings per element is
\begin{equation}
    \fdma[k] = [\tam(f_k,f_{\sfr, 0}), \ldots, \tam(f_k,f_{\sfr,\Nt-1})]^{\sfT}.
\end{equation}
We use the normalization term $M_k$ to satisfy a radiated power constraint on the DMA beamformer, as discussed in Section \ref{sec: results}.

\subsection{Frequency-selective Lorentzian constraint}
Next, we define the frequency-selective Lorentzian constraint to describe the DMA beamforming weights $\tam$. Presenting the frequency-selective magnetic polarizability term from \eqref{eq: mag pol} in terms of a frequency-selective Lorentzian constraint allows for a simplified analysis of the DMA beamformer and its wideband beamforming gain performance. For a phase angle $\zeta$ representing the reconfigurability of the DMA element, the Lorentzian constraint is \cite{smith2017analysis}
\begin{equation}\label{eq: lor con}
    \mathcal{Q} = \left\{-\frac{\sfj-e^{\sfj\zeta}}{2} : \zeta\in[0,2\pi]\right\}.
\end{equation}
\blue{Fig. \ref{fig: lorentz_constraint} shows a plot of the Lorentzian weight distribution in \eqref{eq: lor con} compared with a unit-amplitude weight distribution as $\mathcal{Q} = \left\{ e^{\sfj\zeta} : \zeta\in[0,2\pi] \right\}$. It is important to note that the Lorentzian weight distribution restricts the available phase range for beamforming to $[-\pi,0]$, whereas the unit-amplitude weight distribution allows for the full $[0,2\pi]$ phase range.}

Based on the Lorentzian constraint in \eqref{eq: lor con}, to derive the frequency-selective magnetic polarizability term, let 
\begin{equation}\label{eq: mag pol angle}
    \Psi(f,f_{\sfr, \nth}) = \angle \tam(f,f_{\sfr, \nth}) = \arctan \left( \frac{2\pi (f_{\sfr, \nth}^2-f^2)}{\Gamma f} \right)
\end{equation} represent the angle of the magnetic polarizability. Then, we can express the magnetic polarizability in \eqref{eq: mag pol norm} using the Lorentzian form in \eqref{eq: lor con} as
\begin{equation}\label{eq: mag pol cos}
    \tam(f,f_{\sfr, \nth}) = \cos(\Psi(f,f_{\sfr, \nth})e^{\sfj(\Psi(f,f_{\sfr, \nth})-\frac{\pi}{2})}.
\end{equation}
We relate \eqref{eq: mag pol cos} to the Lorentzian constraint in \eqref{eq: lor con} as follows.

\begin{figure}
	\centering
	\includegraphics[width=0.9\linewidth]{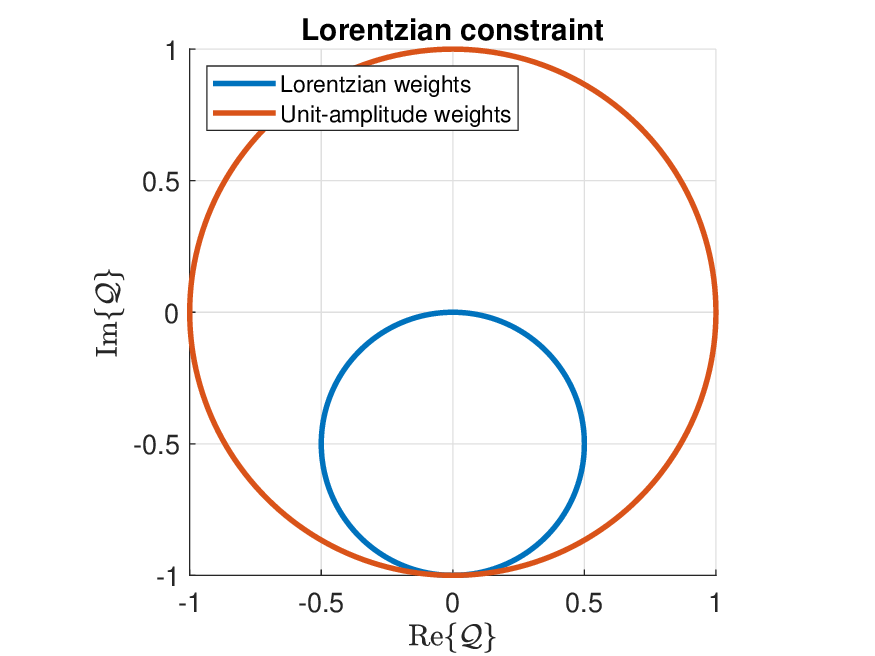}
	\caption{\blue{Lorentzian-constrained weight distribution for the DMA elements compared with unit-amplitude weights. We find that the Lorentzian-constrained weights allow for a total phase range of $[-\pi,0]$.}}
	\label{fig: lorentz_constraint}
\end{figure}

\begin{lemma}\label{lem:  mag pol}
The magnetic polarizability term $\am$ in \eqref{eq: mag pol norm} is related to the Lorentzian constraint in \eqref{eq: lor con} by 
\begin{equation}\label{eq: mag pol lor}
    \tam(f,f_{\sfr, \nth}) = -\frac{\sfj-e^{\sfj 2  \Psi(f,f_{\sfr, \nth})}}{2}.
\end{equation}
\end{lemma}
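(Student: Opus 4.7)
The plan is to start from the factored expression in \eqref{eq: mag pol cos}, namely $\tam = \cos(\Psi)\,e^{\sfj(\Psi - \pi/2)}$, and rearrange it directly into the Lorentzian form of \eqref{eq: mag pol lor}. The key observation is that Euler's identity causes the product $\cos(\Psi)\,e^{\sfj\Psi}$ to collapse cleanly, which produces the exponential $e^{\sfj 2\Psi}$ appearing in the target expression, while the extra phase factor $e^{-\sfj\pi/2}$ contributes the constant $-\sfj/2$ offset characteristic of the Lorentzian constraint \eqref{eq: lor con}. The lemma is therefore not a deep structural result but a compact restatement of \eqref{eq: mag pol cos} in the parametric form used by prior DMA beamforming literature.

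First, I would write $\cos(\Psi) = \tfrac{1}{2}(e^{\sfj\Psi} + e^{-\sfj\Psi})$, multiply through by $e^{\sfj(\Psi - \pi/2)}$, and distribute to obtain
\[
\cos(\Psi)\,e^{\sfj(\Psi - \pi/2)} = \tfrac{1}{2}\!\left( e^{\sfj(2\Psi - \pi/2)} + e^{-\sfj\pi/2} \right).
\]
Substituting $e^{-\sfj\pi/2} = -\sfj$ in the second summand and simplifying the first summand then yields the Lorentzian representation, identifying the Lorentzian phase angle $\zeta$ in \eqref{eq: lor con} with $2\Psi$ (up to the fixed constant phase convention implied by the sign of $\sfj\Gamma f$ in the denominator of \eqref{eq: mag pol}).

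As a cross-check, I would verify the identity by matching real and imaginary parts on both sides of \eqref{eq: mag pol lor}. Expanding the right-hand side with Euler's formula and applying the double-angle identities $\cos(2\Psi) = 2\cos^2(\Psi) - 1$ and $\sin(2\Psi) = 2\sin(\Psi)\cos(\Psi)$ reduces the target equality to the elementary identity $\cos(\Psi)\sin(\Psi) - \sfj\cos^2(\Psi)$ obtained by multiplying the left-hand side out directly. This also gives a geometric confirmation that $\tam$ traces out the Lorentzian circle in the complex plane as $\Psi$ varies over the image of the tunable resonant frequency $f_{\sfr,\nth}$ under the $\arctan$ map in \eqref{eq: mag pol angle}.

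The main obstacle is not mathematical depth but careful bookkeeping of the constant phase offsets: the $-\pi/2$ in \eqref{eq: mag pol cos}, the $-\sfj$ baseline in \eqref{eq: lor con}, and the argument ordering in \eqref{eq: mag pol angle} (since $\arctan(x) + \arctan(1/x) = \pi/2$ for $x>0$ is implicitly used in reading off the phase of $\tam$ from \eqref{eq: mag pol}). Once these conventions are aligned consistently, the lemma follows in a single line of algebra.
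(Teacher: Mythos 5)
Your overall route is the same as the paper's: both proofs start from \eqref{eq: mag pol cos} and reduce it with Euler's formula (the paper detours through $\cos\Psi=1-2\sin^2(\Psi/2)$, but it is the same one-line computation). The gap sits exactly at the point you dismiss as a ``fixed constant phase convention.'' Your first display is correct, but it gives $\tam=\tfrac{1}{2}\bigl(e^{\sfj(2\Psi-\pi/2)}-\sfj\bigr)$, i.e.\ a Lorentzian angle $\zeta=2\Psi-\pi/2$, not $\zeta=2\Psi$: passing from $e^{\sfj(2\Psi-\pi/2)}$ to the $e^{\sfj 2\Psi}$ of \eqref{eq: mag pol lor} silently discards a factor $e^{-\sfj\pi/2}$. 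Equivalently, the exact consequence of \eqref{eq: mag pol cos} is $\tam=-\tfrac{\sfj}{2}\bigl(1+e^{\sfj 2\Psi}\bigr)$. A spot check at resonance ($f=f_{\sfr,\nth}$, so $\Psi=0$) shows the difference: there $\tam=-\sfj$ with unit magnitude, while the right-hand side of \eqref{eq: mag pol lor} evaluates to $(1-\sfj)/2$. For the same reason your proposed real/imaginary-part cross-check does not close as claimed: the real part of $-\tfrac{1}{2}(\sfj-e^{\sfj 2\Psi})$ is $\cos^2\Psi-\tfrac{1}{2}$, whereas the real part of $\cos(\Psi)e^{\sfj(\Psi-\pi/2)}$ is $\sin\Psi\cos\Psi$; the two sides agree only after the exponential term is rotated by $-\pi/2$, so the check would expose the offset rather than reduce to an elementary identity.

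In fairness, this slip is inherited from the paper: carrying out the identities displayed in the paper's own appendix proof also lands on $-\tfrac{\sfj}{2}(1+e^{\sfj 2\Psi})$, and the root cause is that \eqref{eq: mag pol angle} labels $\arctan\bigl(2\pi(f_{\sfr,\nth}^2-f^2)/(\Gamma f)\bigr)$ as $\angle\tam$ while the true phase is that quantity minus $\pi/2$. The offset is a constant common to every element, so nothing downstream is harmed: the weights still trace the Lorentzian circle of \eqref{eq: lor con}, and the extra unit-modulus factor $-\sfj$ cancels inside $|\heff^{\mathsf{T}}[k]\fdma[k]|^2$ in Lemma~\ref{lem: 2}. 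But as a proof of the identity as displayed, your argument is incomplete: you must either state the result as $\tam=-\tfrac{1}{2}\bigl(\sfj-e^{\sfj(2\Psi-\pi/2)}\bigr)$ or redefine $\Psi$ (e.g.\ as $\arctan(\cdot)-\pi/4$) so that $\zeta=2\Psi$ holds exactly; asserting the identification $\zeta=2\Psi$ directly from your display, as written, does not follow.
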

\begin{proof}
	Check Appendix~\ref{proof:  lem mag pol} for the proof.
\end{proof}
\noindent \blue{The magnetic polarizability expression in \eqref{eq: mag pol lor} provides a useful reformulation of the original magnetic polarizability expression in \eqref{eq: mag pol} to match the Lorentzian constraint in \eqref{eq: lor con}. While \eqref{eq: mag pol lor} does not provide a solution for optimizing the resonant frequencies, the Lorentzian form of the frequency-selective magnetic polarizability in \eqref{eq: mag pol lor} helps to simplify the analysis for developing a wideband beamforming gain approximation.}

Next, we use the magnetic polarizability in \eqref{eq: mag pol lor} to derive an approximation on the beamforming gain of the wideband DMA system. The DMA beamformer takes the form of \eqref{eq: mag pol lor} as
\begin{equation}\label{eq: dma beamformer}
    \fdma[k] = \left[-\frac{\sfj-e^{\sfj 2  \Psi(f_k,f_{\sfr, 0})}}{2}, \ldots, -\frac{\sfj-e^{\sfj 2  \Psi(f_k,f_{\sfr, \Nt-1})}}{2}\right]^{\sfT}.
\end{equation}
We derive an approximation for the beamforming gain based on the formulation of the DMA beamformer in \eqref{eq: dma beamformer} to simplify the beamforming gain analysis as follows. We also denote $\mathbf{\Psi}(f_k) = [\Psi(f_k,f_{\sfr, 0}), \ldots, \Psi(f_k,f_{\sfr, \Nt-1})]^\sfT$ as the vector of frequency-selective DMA phase weights to simplify the notation.
\begin{lemma}\label{lem: 2}
For large $\Nt$ and either $\phit \neq \phi_{\mathsf{g}}$ or $\btg > \frac{2\pi f}{c}$, the DMA beamforming gain is approximately
\begin{equation}\label{eq: dma bf gain}
    |\heff^{\mathsf{T}}[k] \fdma[k]|^2 \approx \left|\frac{1}{2}\heff^{\mathsf{T}}[k]e^{\sfj 2 \mathbf{\Psi}(f_k)} \right|^2.
\end{equation}

\end{lemma}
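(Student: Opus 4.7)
The plan is to split the beamformer in~\eqref{eq: dma beamformer} into a constant offset and a tunable unit-modulus part, and then show that $\heff[k]$ is asymptotically orthogonal to the constant offset under the stated hypotheses. Writing
$$\fdma[k] = -\frac{\sfj}{2}\mathbf{1} + \frac{1}{2}e^{\sfj 2 \mathbf{\Psi}(f_k)},$$
where $\mathbf{1}$ is the $\Nt \times 1$ all-ones vector, the beamforming inner product decomposes as
$$\heff^{\sfT}[k]\fdma[k] = -\frac{\sfj}{2}\,\heff^{\sfT}[k]\mathbf{1} + \frac{1}{2}\,\heff^{\sfT}[k]e^{\sfj 2 \mathbf{\Psi}(f_k)}.$$
The lemma therefore reduces to showing that the first summand is $O(1)$ while the second can be made $O(\Nt)$ under a reasonable tuning.

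For the first summand, I would use~\eqref{eq: eff channel} together with~\eqref{eq: steering vec} and~\eqref{eq: dma waveguide channel} to express it as the geometric sum
$$\heff^{\sfT}[k]\mathbf{1} = \sum_{n=0}^{\Nt-1} e^{\sfj n \dx \alpha_k}, \qquad \alpha_k := \frac{2\pi f_k}{c}\sin\phit - \btg(f_k).$$
The hypothesis $\phit\neq\phi_{\mathsf{g}}$ rules out the unique angle for which $\alpha_k=0$ whenever the waveguide's natural radiation angle (the real $\phi_{\mathsf{g}}$ satisfying $\btg=\tfrac{2\pi f_k}{c}\sin\phi_{\mathsf{g}}$) exists, while the hypothesis $\btg(f_k)>2\pi f_k/c$ rules out the existence of any such angle altogether since $|\sin\phit|\leq 1$. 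In either case $\dx\alpha_k\neq 0\pmod{2\pi}$, so the closed-form geometric sum yields
$$\bigl|\heff^{\sfT}[k]\mathbf{1}\bigr| = \left|\frac{\sin(\Nt \dx \alpha_k/2)}{\sin(\dx \alpha_k/2)}\right| \leq \frac{1}{|\sin(\dx \alpha_k/2)|},$$
which is $O(1)$ in $\Nt$. Since the retained term $\tfrac{1}{2}\heff^{\sfT}[k]e^{\sfj 2 \mathbf{\Psi}(f_k)}$ can be driven to scale as $O(\Nt)$ by choosing the resonant frequencies $f_{\sfr,n}$ so that $2\Psi(f_k,f_{\sfr,n})$ approximately cancels the corresponding entry of $\angle\heff[k]$, the dropped summand becomes asymptotically negligible and squaring magnitudes delivers~\eqref{eq: dma bf gain}.

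The hard part is making the $O(\Nt)$ scaling of the retained term rigorous: the Lorentzian constraint in~\eqref{eq: lor con} restricts the reachable phases of each weight, and the finite tuning bandwidth $\Rtb$ further limits the phase $2\Psi(f_k,f_{\sfr,n})$ attainable at off-center subcarriers, so exact alignment with $-\angle\heff[k]$ is not generally feasible across the full band. I would handle this by arguing that it suffices for a constant fraction of the $\Nt$ elements to contribute with phase error below $\pi/2$, which still yields linear-in-$\Nt$ coherent growth up to a tuning-dependent efficiency constant. A secondary subtlety is that the excluded configuration $\phit=\phi_{\mathsf{g}}$ with $\btg(f_k)\leq 2\pi f_k/c$ is precisely the degenerate regime in which the untuned geometric sum itself grows as $\Nt$, so the stated hypotheses are tight rather than merely convenient.
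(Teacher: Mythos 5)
Your proposal is correct and follows essentially the same route as the paper's proof: the identical decomposition $\fdma[k]=-\frac{\sfj}{2}\mathbf{1}+\frac{1}{2}e^{\sfj 2\mathbf{\Psi}(f_k)}$, the geometric-sum (Dirichlet-kernel) bound on $\heff^{\sfT}[k]\mathbf{1}$, and the same reading of the two hypotheses ($\phit\neq\phi_{\mathsf{g}}$ suppresses the propagation lobe; $\btg>2\pi f/c$ makes the radiation angle non-real so the lobe never exists). The only difference is cosmetic: you bound the dropped term by an $\Nt$-independent constant and explicitly flag that the retained term must grow like $\Nt$ under tuning, a point the paper leaves implicit by taking the normalized propagation-lobe term to zero as $\Nt\to\infty$.
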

\begin{proof}
	Check Appendix~\ref{proof of lemma 2} for the proof.
\end{proof}

 \noindent In Section \ref{sec: dma approx}, we develop methods to account for the impact of frequency-selective, tuning bandwidth and waveguide attenuation on the DMA beamforming gain based on \eqref{eq: dma bf gain}.

\subsection{Approximation of phase of the polarizability expression and center-frequency DMA beamforming}

We now provide an approximation for the magnetic polarizability function to further simplify the analysis of the DMA beamforming gain. The following lemma gives an approximation for the polarizability phase using a first-order Taylor series expansion.

\begin{lemma}\label{lem: 3}
A linear approximation for the phase of the magnetic polarizability is
\begin{equation}\label{eq: ang pol angle approx}
   \Psi(f,f_{\sfr, \nth}) = -\frac{\pi}{2} - \frac{4\pi}{\Gamma}(f-f_{\sfr, \nth}) + O(f^2).
\end{equation}
\end{lemma}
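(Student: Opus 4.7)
The plan is to compute the first-order Taylor expansion of $\Psi(f, f_{\sfr, \nth})$ about the resonance point $f = f_{\sfr,\nth}$, and absorb the higher-order corrections into the $O(f^2)$ remainder (which is valid in the operating band where $f$ stays close to $f_{\sfr,\nth}$ and both are bounded). The work splits cleanly into evaluating the value and the slope of $\Psi$ at resonance.

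First, I would evaluate $\Psi$ at $f = f_{\sfr,\nth}$. Reading $\Psi$ as $\angle \tam$, the denominator of the magnetic polarizability in \eqref{eq: mag pol} collapses to $\sfj \Gamma f_{\sfr,\nth}$ at resonance, so $\tam$ is purely negative imaginary and $\angle \tam = -\pi/2$. This supplies the constant term $\Psi(f_{\sfr,\nth}, f_{\sfr,\nth}) = -\pi/2$.

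Next, I would compute $\partial_f \Psi$ at $f = f_{\sfr,\nth}$ using the $\arctan$ form in \eqref{eq: mag pol angle}. Writing $\Psi'(f) = g'(f)/(1 + g(f)^2)$ with $g(f) = 2\pi (f_{\sfr,\nth}^2 - f^2)/(\Gamma f) = 2\pi f_{\sfr,\nth}^2/(\Gamma f) - 2\pi f/\Gamma$, a routine derivative gives $g'(f) = -2\pi f_{\sfr,\nth}^2/(\Gamma f^2) - 2\pi/\Gamma$. At resonance, $g(f_{\sfr,\nth}) = 0$ so the denominator $1 + g(f)^2$ evaluates to $1$, while $g'(f_{\sfr,\nth}) = -4\pi/\Gamma$. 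Thus $\Psi'(f_{\sfr,\nth}) = -4\pi/\Gamma$, and Taylor's theorem immediately yields $\Psi(f, f_{\sfr,\nth}) = -\pi/2 - (4\pi/\Gamma)(f - f_{\sfr,\nth}) + O((f - f_{\sfr,\nth})^2)$, matching the stated claim.

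Because this is essentially a one-line Taylor expansion, there is no deep obstacle. The only subtle point worth flagging is the constant $-\pi/2$: taking \eqref{eq: mag pol angle} at face value as a bare $\arctan$ would give $\Psi(f_{\sfr,\nth}) = 0$ (since $\arctan(0) = 0$), but the intended interpretation is $\Psi = \angle \tam$, and the genuine argument of $\tam$ at resonance is $-\pi/2$. Keeping this branch choice consistent is the only care needed; the derivative computation is unaffected, so the linear coefficient $-4\pi/\Gamma$ follows regardless of the branch convention.
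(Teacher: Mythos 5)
Your proposal is correct and follows essentially the same route as the paper's proof: a first-order Taylor expansion of $\Psi$ about $f=f_{\sfr,\nth}$, with the slope $\Psi'(f_{\sfr,\nth})=-\frac{4\pi}{\Gamma}$ verified directly. In fact you are more careful than the paper on the constant term, since the paper's appendix expands the bare $\arctan$ form (which vanishes at resonance) and silently inherits the $-\frac{\pi}{2}$ from the convention $\Psi=\angle\tam$, exactly the branch-choice point you flag.
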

\begin{proof}
	Check Appendix~\ref{proof: lem 3} for the proof.
\end{proof}
\noindent In  Section \ref{sec: dma approx}, we use \eqref{eq: ang pol angle approx} to determine the impact of frequency-selectivity, tuning bandwidth, and waveguide attenuation on the wideband MISO beamforming gain.

We now define a simple DMA beamforming method for configuring the DMA weights based on the Lorentzian mapping technique defined in \cite{smith2017analysis} that configures the DMA elements at the center frequency of the bandwidth. Since the beamformer does not attempt to incorporate the frequency-selective channel or DMA effects, we consider this beamformer as a performance baseline. We first define the phase of the frequency-selective effective wireless channel. Based on the effective channel model in \eqref{eq: eff channel}, we define the phase of the $\nth$th effective channel element $h_\nth = e^{\sfj \varphi_{\nth,k}}$ for a substrate permittivity $\epsilon_{\mathsf{r}}$ and waveguide cutoff frequency $f_{{\mathsf{c}},10}$ as 
\begin{equation}\label{eq: channel phase}
    \varphi_{\nth,k} = \dx(\nth-1) \left( \frac{2 \pi f_k}{c}  \sin\phit + \frac{2\pi \epsilon_{\mathsf{r}}}{c} \sqrt{f_k^2-f_{{\mathsf{c}},10}^2} \right)
\end{equation}
The phase term in \eqref{eq: channel phase} includes both the wireless and waveguide propagation, where the propagation phase constant $\btg$ has been decomposed \cite{balanis2016antenna}. We will account for the effects of waveguide attenuation in Section \ref{sec: atten}.

Next, we use \eqref{eq: ang pol angle approx} to match the magnetic polarizability angle to the angle of the effective channel at the center frequency. We consider only the $e^{\sfj 2 \Psi}$ term in \eqref{eq: mag pol lor} as it accounts for the reconfigurability of the DMA elements for beamforming. Substituting \eqref{eq: ang pol angle approx} into \eqref{eq: mag pol lor} as the approximate Lorentzian beamformer, we find that a tuning bandwidth of $\Rtb = \frac{\Gamma}{4}$ results in a phase tunability over the range $[-\pi,0]$ at the center frequency. We assume this tuning bandwidth to define the DMA beamformer at the center frequency $f_{ \frac{K}{2}}$, and aim to counteract the channel phase $\varphi_{\nth,k}$ with
\begin{equation}\label{eq: center freq tuning}
    f_{\sfr, \nth}=   -\frac{\Gamma}{8\pi}\varphi_{\nth, \frac{K}{2} }  +   f_{ \frac{K}{2}} .
\end{equation}
Setting the resonant frequencies of the DMA elements according to \eqref{eq: center freq tuning} will guarantee that the phase of the effective wireless channel aligns with the DMA beamformer at the center frequency of the signal bandwidth. In the next section, we will use the center frequency DMA beamforming method based on \eqref{eq: center freq tuning} to derive an approximation for the wideband DMA beamforming gain.

\section{Beamforming gain approximation for a wideband DMA-based communication system}\label{sec: dma approx}

In this section, we derive an approximation for the beamforming gain of a wideband DMA system with a LOS wireless channel. We investigate the impact of the DMA frequency selectivity, tuning bandwidth and waveguide power leakage on the resulting data rates as a function of the signal bandwidth $B$ and use the approximation to give valuable insights into the wideband performance of DMAs. We develop the approximation by deriving separate expressions for the DMA frequency selectivity, tuning bandwidth and waveguide power leakage, and combining these terms heuristically to create the full approximation for the DMA beamforming gain. \blue{We assume an LOS wireless channel to allow for the tractable analysis of the DMA wideband beamforming gain, where a multipath channel environment would significantly complicate the development of a closed-form expression. Therefore, we focus on the LOS case now, and include simulation results in Section \ref{subsec: multipath} to demonstrate the effectiveness of DMAs in a wideband system with multipath propagation.}

\blue{Due to the non-convex frequency-selective Lorentzian constraint, it is very difficult to derive a closed-form solution for optimizing the design variables. Moreover, optimizing the design variables for a specific set of resonant frequencies is not practical, since the resonant frequencies need to be changed to enable beamforming while the design variables are a part of the physical DMA design that cannot be changed once the DMA is fabricated. Therefore, we focus on deriving an approximation for the beamforming gain of a wideband DMA system and extract important insights from the beamforming gain expression to propose broad design guidelines for the DMA elements.}

 To define the DMA beamforming gain approximation, let $\Fsel(f_k)$ be a term accounting for the frequency-selectivity of the beamforming gain.
 In addition to the term $\Fsel(f_k)$, to account for the losses due to practical issues like waveguide attenuation and limited tunability of the DMA, we define two penalty terms.
 We denote a DMA angular weight fill as $\xi \in [0,\pi)$ which is defined in Section~\ref{sec: Impact of tuning bandwidth } as the range of available phase values for the DMA weights. For a radiated power $\Prad$, we also denote a fractional radiated power term as $\Lambda = \frac{\Prad}{\Pin}$, which is discussed further in Section \ref{subsec: Design considerations}. Let $\Wfill(\xi) \in [0,1]$ be the penalty term which accounts for the impact of the limited tuning bandwidth. Let $\Aleak(\Lambda) \in [0,1]$  be the term which accounts for the impact of waveguide attenuation on beamforming gain. With these terms, we propose a heuristic approximation for the wideband DMA beamforming gain as 
 \begin{equation}\label{eq: approx bound}
	|\heff^{\mathsf{T}}[k] \left(\fdma[k] \odot \mathbf{h}_{\mathsf{att}}[k] \right)|^2 \approx \Fsel(f_k)\Wfill(\xi) \Aleak(\Lambda).
\end{equation}
Our goal is to use the approximation to extract important insights into the performance of a wideband DMA system. By separating the beamforming gain terms, we can determine how each term is affected by the DMA design parameters like the damping factor, steering angle, or propagation constant. We describe the methodology and derive expressions for each of these terms in the following sections.

\subsection{Impact of frequency selectivity}\label{sec: freq sel}

We now derive an expression to characterize the beamforming gain of the DMA across the entire bandwidth. Using \eqref{eq: center freq tuning}, the MISO beamforming gain with the center-frequency beamformer (without waveguide attenuation) is
\begin{multline}\label{eq: bf gain freq}
   \Fsel(f_k) = |\heff^{\mathsf{T}}[k] \fdma[k]|^2 
    = \\  \left| \frac{1}{2\sqrt{\Nt}}\sum\limits_{\nth=1}^{\Nt }
 \exp\left({\sfj \varphi_{\nth,k}-\sfj \varphi_{\nth, \frac{K}{2}} }\right) \right|^2.
\end{multline}
The DMA at the center frequency $f_{\frac{K}{2}}$ achieves the maximum beamforming gain, but there is a phase mismatch at other frequencies in the bandwidth that will degrade the overall beamforming gain. This effect is well-studied in the context of phased arrays and is known as beam-squint~\cite{10002944}. DMAs, however, suffer greater losses in wideband settings compared to phased arrays because of the combination of wireless and waveguide channel frequency-selectivity. We quantify the frequency-selective beamforming gain loss from beam-squint at each subcarrier frequency as follows.  

Let $\Delta_k = f_k-f_{\frac{K}{2}}$ be the difference between a subcarrier frequency and the center frequency of the bandwidth. The phase difference between the subcarrier frequency $f_k$ and center frequency $f_{\frac{K}{2}}$ of the $\nth$th element due to the effective channel is then
\begin{multline}\label{eq: chin}
	\chi_{\sfo}[k] = - \dx \Biggl( \frac{2 \pi \Delta_k}{c} \sin\phit  \\ + \frac{2\pi \epsilon_{\mathsf{r}}}{c} \left( \sqrt{(f_{\frac{K}{2}}+\Delta_k)^2-f_{{\mathsf{c}},10}^2} -  \sqrt{f_{\frac{K}{2}}^2-f_{{\mathsf{c}},10}^2} \right) \Biggl).
\end{multline}
	The expression in \eqref{eq: chin} encapsulates the beam-squint effect of the wireless channel and DMA waveguide propagation since \eqref{eq: chin} describes the phase difference for the DMA beamforming weights between a specific subcarrier frequency and the center frequency of operation. We can then define the frequency-selective beamforming gain per subcarrier frequency including the approximate linear phase profile of the magnetic polarizability in the following lemma.
\begin{lemma}\label{lem: 4}

     The frequency-selective beamforming gain at each subcarrier frequency is
   \begin{equation}\label{eq: freq select}
        \Fsel(f_k)  = \left| \frac{1}{2}  \frac{1}{\sqrt{\Nt}} \frac{\sin\left( \frac{\Nt}{2}\chi_{\sfo}[k]\right)}{\sin \left( \frac{1}{2}\chi_{\sfo}[k] \right)}\right|^2.
\end{equation}
\end{lemma}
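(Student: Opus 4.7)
The plan is to start from the expression for $\Fsel(f_k)$ already established in \eqref{eq: bf gain freq}, which writes the beamforming gain as the squared magnitude of a normalized sum of complex exponentials whose phases are $\varphi_{\nth,k}-\varphi_{\nth,K/2}$. The key observation is that this phase difference depends on the element index $\nth$ only through the linear factor $(\nth-1)$, so the sum reduces to a geometric series. Once recognized as a geometric series, the standard Dirichlet-kernel identity yields the claimed ratio of sines.

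First I would substitute the channel-phase definition \eqref{eq: channel phase} into the exponent of \eqref{eq: bf gain freq}. Writing $f_k = f_{K/2} + \Delta_k$ and subtracting the $K/2$-indexed term from the $k$-indexed term, the $\dx(\nth-1)$ prefactor pulls out cleanly, and what remains in the parenthesis is exactly the expression inside $\chi_{\sfo}[k]$ as defined in \eqref{eq: chin}. This establishes the identity
\begin{equation*}
    \varphi_{\nth,k}-\varphi_{\nth,K/2} = -(\nth-1)\,\chi_{\sfo}[k],
\end{equation*}
so that the sum in \eqref{eq: bf gain freq} becomes $\sum_{\nth=1}^{\Nt} e^{-\sfj(\nth-1)\chi_{\sfo}[k]}$.

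Next I would apply the standard closed-form for a finite geometric series,
\begin{equation*}
    \sum_{\nth=1}^{\Nt} e^{-\sfj(\nth-1)\chi_{\sfo}[k]} = \frac{1-e^{-\sfj \Nt\, \chi_{\sfo}[k]}}{1-e^{-\sfj \chi_{\sfo}[k]}},
\end{equation*}
and factor $e^{-\sfj \Nt \chi_{\sfo}[k]/2}$ from the numerator and $e^{-\sfj \chi_{\sfo}[k]/2}$ from the denominator to obtain the Dirichlet-kernel form $e^{-\sfj(\Nt-1)\chi_{\sfo}[k]/2}\sin(\Nt\chi_{\sfo}[k]/2)/\sin(\chi_{\sfo}[k]/2)$. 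Taking the modulus eliminates the leading exponential, and squaring together with the $1/(2\sqrt{\Nt})$ prefactor from \eqref{eq: bf gain freq} yields the claimed expression for $\Fsel(f_k)$.

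There is no substantial obstacle in this argument; the only step that requires care is the algebraic bookkeeping when combining the wireless steering term $\frac{2\pi f_k}{c}\sin\phit$ and the waveguide propagation term $\frac{2\pi\epsilon_{\sfr}}{c}\sqrt{f_k^2 - f_{{\sfc},10}^2}$ from \eqref{eq: channel phase} to match the exact form of $\chi_{\sfo}[k]$ in \eqref{eq: chin}, including the overall sign. Implicit in the derivation is the use of Lemmas \ref{lem: 2} and \ref{lem: 3} together with the center-frequency tuning rule \eqref{eq: center freq tuning}, which justify replacing the Lorentzian-constrained DMA weights by the unit-modulus phase $e^{\sfj 2\mathbf{\Psi}(f_k)}$ (picking up the $1/2$ factor) and reduce the resulting phase to $-\varphi_{\nth,K/2}$ plus an element-independent constant that drops out of the magnitude. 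Once those reductions are noted, the result follows immediately from the geometric-series identity.
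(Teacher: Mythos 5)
Your proposal is correct and follows essentially the same route as the paper's proof: both reduce the center-frequency-tuned gain to a sum of exponentials whose element-dependent phase is $(\nth-1)\chi_{\sfo}[k]$ (with an element-independent phase factor, e.g.\ $e^{\sfj 8\pi\Delta_k/\Gamma}$, that vanishes under the modulus) and then apply the geometric-series/Dirichlet-kernel identity to obtain the ratio of sines. The only difference is presentational — you start from \eqref{eq: bf gain freq} and make the geometric-series step and the sign bookkeeping for $\chi_{\sfo}[k]$ explicit, whereas the paper compresses this into two displayed equations.
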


\begin{proof}
	Check Appendix~\ref{proof: lem4} for the proof.
	\end{proof}

It is important to note that since the DMA elements are assumed to have a linear phase profile from the polarizability approximation, the frequency-selectivity of the DMA elements has no impact on the overall frequency-selective beamforming gain. Under this assumption, the beam-squint of the wireless channel and DMA waveguide are the primary frequency-selective effects that degrade wideband performance for a DMA. This is valid for subcarrier frequencies near the resonant frequency tuning of the DMA elements and generally becomes less accurate as the signal bandwidth increases.

\subsection{Impact of tuning bandwidth}\label{sec: Impact of tuning bandwidth }

We now discuss the impact of the tuning bandwidth $\Rtb$ on the overall DMA performance. DMA weights do not achieve the full $[0,-\pi]$ phase range possible with the Lorentzian-constrained weight formulation in \eqref{eq: lor con} unless there is an infinite tuning bandwidth, which is not practically possible. The feasible weights are a function of both the varactor capacitance due to tuning bandwidth constraints and the damping factor. We use the term {\emph{weight fill}} to describe the ratio of Lorentzian-constrained weights available compared to the full set in \eqref{eq: lor con}. In the previous section, we made the assumption that the DMA elements had access to the full Lorentzian-constrained weights at the center frequency. Since the weight fill depends on the damping factor and tuning bandwidth, we develop an expression to account for the additional weight constraint of the weight fill on the final beamforming gain.

To derive a weight fill expression that describes how a weight fill impacts the DMA beamforming gain, we first introduce the weight fill ratio. The weight fill ratio $\Wratio(\Gamma,\Rtb)$ is described by the difference in phase of the lowest and highest resonant frequencies within the tuning bandwidth as
\begin{equation}\label{eq: weight ratio}
    \Wratio(\Gamma,\Rtb) = \frac{|\angle \tam(f_{\mathsf{t}}, f_{\sfr,{\mathsf{max}}}) - \angle \tam(f_{\mathsf{t}}, f_{\sfr,{\mathsf{min}}})|}{\pi},
\end{equation}
where it is normalized by $\pi$ to ensure the weight fill ratio remains between $[0,1]$. We also define $\xi = \pi \Wratio(\Gamma,\Rtb)$ to represent the angular weight fill of the DMA, with a range from $[0,\pi]$.
Next, we use the weight fill ratio to derive an approximation for the normalized beamforming gain from an incomplete Lorentzian-constrained curve as follows. 
\begin{lemma}\label{lem: 5}
    For a given weight fill ratio, the normalized beamforming gain is approximately
    \begin{equation}\label{eq: weight fill}
    \Wfill(\xi)  = \left| \frac{1}{2\pi} \left( 2\sin (\xi) + 2\xi \right) \right|^2.
\end{equation}
\end{lemma}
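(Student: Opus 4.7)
The plan is to treat the limited tuning bandwidth as a hard phase constraint on the effective weight $e^{\sfj 2 \Psi_n}$ appearing in the simplified beamforming gain of Lemma~\ref{lem: 2}, and then average the residual phase error over the channel. By the construction of the angular weight fill, the accessible values of $\angle \tam$ form an interval of length $\xi$, which in turn restricts the effective phase $2\Psi_n$ to an interval of length $2\xi$ (reducing to the full circle exactly when $\xi=\pi$). Using the center-frequency tuning in \eqref{eq: center freq tuning}, the goal is to drive $2\Psi_n$ toward the negative of the channel phase $\varphi_{n,K/2}$, so without loss of generality I would shift the accessible interval to $[-\xi, \xi]$ centered at the desired target.

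Starting from $|\heff^\sfT[k]\fdma[k]|^2 \approx \big|\tfrac{1}{2}\sum_n h_n e^{\sfj 2 \Psi_n}\big|^2$, I would model the channel phases $\varphi_n$ (relative to the reference set by the center-frequency tuning) as uniformly distributed on $[-\pi, \pi]$; this is reasonable because for a ULA with moderate element spacing the aggregate wireless-plus-waveguide phase wraps around the full circle as $n$ varies across the $\Nt$ elements. For large $\Nt$ I would invoke a law-of-large-numbers-type argument to approximate the normalized sum $\tfrac{1}{\Nt}\sum_n e^{\sfj\theta_n}$ by $\bbE[e^{\sfj\theta}]$, where $\theta_n$ is the residual phase after the best feasible choice of $2\Psi_n$: namely $\theta_n = 0$ when $\varphi_n \in [-\xi, \xi]$, and $\theta_n = \varphi_n \mp \xi$ when $\varphi_n$ lies outside this window, with the weight saturated at the nearer boundary.

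The remaining computation splits the expectation into three integrals over $[-\pi, -\xi)$, $[-\xi, \xi]$, and $(\xi, \pi]$. The middle interval contributes $2\xi/(2\pi)$, and the two truncated tails produce complex-conjugate pieces whose sum simplifies via $\sin(\pi - \xi) = \sin\xi$ to $2\sin\xi/(2\pi)$. Combining yields $\bbE[e^{\sfj\theta}] = \tfrac{1}{2\pi}(2\xi + 2\sin\xi)$, and squaring the magnitude gives $\Wfill(\xi)$ as in \eqref{eq: weight fill}. The main obstacle I anticipate is rigorously justifying the uniform-phase/LLN approximation: in a deterministic LOS scenario the $\varphi_n$ are not random, so this step really amounts to replacing the empirical phase distribution across the array by a uniform one, which is most accurate when the per-element phase increments are incommensurate with $2\pi$ and $\Nt$ is large. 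A secondary subtlety is justifying the WLOG centering of the accessible interval by absorbing any offset into the reference phase set by the center-frequency tuning.
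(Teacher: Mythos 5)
Your proposal is correct and takes essentially the same route as the paper's proof: both replace the empirical channel-phase distribution across the array by a uniform one for large $\Nt$, saturate out-of-window phases at the nearest edge of the $2\xi$-wide feasible set, and compute the resulting three-region average $\frac{1}{2\pi}\left(2\xi + 2\sin\xi\right)$ before squaring. The only cosmetic difference is that you center the feasible window at phase zero, whereas the paper keeps it centered at $-\pi/2$ and phrases the saturation regions via the sign of $\mathsf{Re}\{h_\nth\}$.
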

\begin{proof}
	Check Appendix~\ref{proof: lem5} for the proof.
\end{proof}

Based on the weight fill expression in \eqref{eq: weight fill}, we find that small weight fill values lead to a large amount of beamforming gain loss since there is a minimal phase range to use for beamforming. Then, as the weight fill increases, the normalized beamforming gain approaches $1$ as the phase range of the DMA beamformer approaches the full $[0,\pi]$ range. Therefore, a sufficient weight fill ratio, achieved through the DMA design, is necessary to ensure that the DMA can steer a beam pattern in a desired direction with minimal loss.

\subsection{Impact of waveguide attenuation}\label{sec: atten}

Lastly, we derive an expression to account for the impact of DMA waveguide attenuation due to power leakage on the MISO beamforming gain. As power is gradually radiated out through the DMA elements, the waveguide attenuation will alter the amplitude of each subsequent DMA element weight as an effective amplitude taper. This can potentially lead to additional losses. Let $\Prad$ be the radiated power from the DMA across the center subcarrier bandwidth and $\Lambda = \frac{\Prad}{\Pin}$ be the fractional radiated power. We derive a factor $\Aleak(\Lambda) \in [0,1]$ that describes the beamforming gain loss due to waveguide attenuation.

To define the penalty $\Aleak(\Lambda)$ independent of the DMA beamforming capability, we assume a frequency-selective matched-filter precoding $\bff_{\mathsf{mf}}[k]$ such that  $h_\nth[k]f_{\mathsf{mf},\nth}[k] = 1 \forall k,\nth$.
We make this assumption because we have separately accounted for the non-ideal effects of the frequency-selectivity, weight amplitude distribution, and tuning bandwidth in the previous sections. 
Therefore, with this assumption of ideal matched filtering and isolating the waveguide power leakage vector $\bh_{\mathsf{att}}[k]$, we define the penalty $\Aleak(\Lambda)$ as 
\begin{multline}\label{eq: bf gain atten}
\Aleak(\Lambda)=\frac{  | \heff ^{\mathsf{T}}[k](\mathbf{f}_{{\mathsf{mf}}}[k] \odot \bh_{\mathsf{att}}[k]) |^2}{\|\heff ^{\mathsf{T}}[k] \|^2  \| (\mathbf{f}_{{\mathsf{mf}}}[k] \odot \bh_{\mathsf{att}}[k]) \|^2} \\ = \frac{1}{\Nt} \frac{|  \mathbf{1}^{\sfT} \bh_{\mathsf{att}}[k]|^2}{\| \bh_{\mathsf{att}}[k] \|^2}.
\end{multline}
By above definition and from Cauchy-Schwarz inequality, we see that the maximum value of the penalty term is one. The final expression in \eqref{eq: bf gain atten} now depends only on the frequency-selective waveguide attenuation, and takes the form of a geometric series. Therefore, we simplify the  attenuation penalty term $\Aleak(\Lambda)$ in the following lemma.

\begin{lemma}\label{lem: 6}
Assuming ideal matched filtering and large $\Nt$, the DMA waveguide attenuation affects the ideal DMA beamforming gain by a penalty factor $ S(\Lambda)$ approximated as
\begin{equation}\label{eq: approx atten 1}
    \Aleak(\Lambda) \approx \frac{4}{\ln(1-\Lambda)} \tanh\left(\frac{\ln(1-\Lambda)}{4} \right).
\end{equation}
\end{lemma}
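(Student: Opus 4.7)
The plan is to start from the exact geometric-series form of the penalty in equation~(22) and then push through two successive approximations: a large-$\Nt$ / small per-element attenuation approximation to collapse the sums, and a change of variable via the fractional radiated power $\Lambda$ to eliminate the physical parameter $a = \dx\atg(f_k)$. The target expression involves $\tanh$, so I will be on the lookout for a $(1-x)/(1+x)$ pattern, which is the standard disguise for a hyperbolic tangent.

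First, I would substitute $\bh_{\mathsf{att}}[k]$ from \eqref{eq: dma atten vector} into the closed form \eqref{eq: bf gain atten}. Writing $a \bydef \dx\atg(f_k)$, both $|\mathbf{1}^{\sfT}\bh_{\mathsf{att}}[k]|^2$ and $\|\bh_{\mathsf{att}}[k]\|^2$ are finite geometric sums, yielding
\begin{equation}
\Aleak(\Lambda) \;=\; \frac{1}{\Nt}\,\frac{(1-e^{-\Nt a})^2(1-e^{-2a})}{(1-e^{-a})^2(1-e^{-2\Nt a})}.
\end{equation}
Factoring $1-e^{-2a} = (1-e^{-a})(1+e^{-a})$ and $1-e^{-2\Nt a} = (1-e^{-\Nt a})(1+e^{-\Nt a})$ condenses this to
\begin{equation}
\Aleak(\Lambda) \;=\; \frac{1}{\Nt}\,\frac{(1-e^{-\Nt a})(1+e^{-a})}{(1-e^{-a})(1+e^{-\Nt a})}.
\end{equation}

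Next I would invoke the large-$\Nt$ assumption: holding the total waveguide loss $\Nt a$ fixed while $\Nt \to \infty$ forces $a \to 0$, so $1-e^{-a} \approx a$ and $1+e^{-a} \approx 2$. The remaining task is to eliminate $a$ and $\Nt$ in favor of $\Lambda$. The fractional radiated power is the fraction of input power lost over the full waveguide, i.e.\ $1-\Lambda = e^{-2\Nt a}$, which gives $\Nt a = -\tfrac{1}{2}\ln(1-\Lambda)$ and $e^{-\Nt a} = \sqrt{1-\Lambda}$. Substituting,
\begin{equation}
\Aleak(\Lambda) \;\approx\; \frac{2}{\Nt a}\cdot\frac{1-\sqrt{1-\Lambda}}{1+\sqrt{1-\Lambda}} \;=\; \frac{-4}{\ln(1-\Lambda)}\cdot\frac{1-\sqrt{1-\Lambda}}{1+\sqrt{1-\Lambda}}.
\end{equation}

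Finally, I would recognize the $\tanh$ pattern by using $\tanh(x) = (1-e^{-2x})/(1+e^{-2x})$ with $x = \tfrac{1}{4}\ln(1-\Lambda)$; then $e^{-2x} = (1-\Lambda)^{-1/2}$ and, after multiplying the numerator and denominator of $\tanh(x)$ by $(1-\Lambda)^{1/4}$, one obtains $\tanh\!\bigl(\tfrac{1}{4}\ln(1-\Lambda)\bigr) = (\sqrt{1-\Lambda}-1)/(\sqrt{1-\Lambda}+1)$. Substituting this identity into the previous display yields exactly \eqref{eq: approx atten 1}. The routine algebra of the geometric sums is straightforward; the main thing to justify carefully is the large-$\Nt$ approximation, since it requires that $a$ be small while $\Nt a$ is fixed by $\Lambda$ --- i.e.\ the approximation improves as the DMA has many weakly-coupled elements, which is the regime of practical interest. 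The $\tanh$ recognition step is the only non-mechanical part of the argument and is the piece most likely to trip up a careful reader if the derivation simply produced the $(1-\sqrt{1-\Lambda})/(1+\sqrt{1-\Lambda})$ form without pointing out the identification.
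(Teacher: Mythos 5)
Your proposal is correct and follows essentially the same route as the paper: both evaluate the geometric sums in \eqref{eq: bf gain atten}, trade the attenuation constant for $\Lambda$ via $e^{-2\atg \dx (\Nt-1)}=1-\Lambda$, and linearize the per-element factor for large $\Nt$ (your $1-e^{-a}\approx a$, $1+e^{-a}\approx 2$ is exactly the paper's $\tanh(x)\approx x$ step applied to the denominator of its exact $\tanh$-ratio form, and your $N_t$-versus-$(\Nt-1)$ convention for $\Lambda$ is absorbed by the same $\frac{\Nt-1}{\Nt}\approx 1$ approximation the paper makes). The only nit is cosmetic: with the $\tanh(x)=(1-e^{-2x})/(1+e^{-2x})$ form you quote, the clearing factor is $(1-\Lambda)^{1/2}$, not $(1-\Lambda)^{1/4}$ (the latter applies to the symmetric form $(e^{x}-e^{-x})/(e^{x}+e^{-x})$), and either way the identity $\tanh\bigl(\tfrac{1}{4}\ln(1-\Lambda)\bigr)=(\sqrt{1-\Lambda}-1)/(\sqrt{1-\Lambda}+1)$ you use is correct.
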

\begin{proof}
	Check Appendix~\ref{proof: lem6} for the proof.
\end{proof}

The derived expression in \eqref{eq: approx atten 1} gives the important insight that the DMA waveguide attenuation reduces the beamforming gain compared to a normal antenna array, depending on the total radiated power $\Prad$ and leakage constant $\atg$. We also justify the assumption for large $\Nt$ by noting that many DMA elements are required to provide a slow leakage of power and narrow beam patterns, meaning $\Nt$ must be large.

\section{DMA successive beamforming algorithm}\label{sec: successive}

We now describe two beamforming methods used to evaluate the wideband spectral efficiency of a DMA-based system. First, we define the DMA center-frequency beamforming algorithm as a baseline beamforming method to compare and validate the DMA beamforming gain approximation in \eqref{eq: approx bound}. Next, we define a novel, successive beamforming algorithm to improve upon the center-frequency beamforming method for larger signal bandwidths.

\subsection{DMA center-frequency beamforming algorithm}
The center-frequency beamforming algorithm selects the DMA resonant frequency tuning that best matches the channel phase at the center frequency. For the center frequency subcarrier $f_{\frac{K}{2}}$, the resonant frequency values for the center frequency beamformer are given by
\begin{equation}\label{eq: center beamformer}
f_{\sfr, \nth}^{\mathsf{cf}} = \argmin\limits_{f_{\sfr, \nth} \in [f_{\sfr,{\mathsf{min}}},f_{\sfr,{\mathsf{max}}}]} \left| - \frac{\sfj-e^{\sfj \angle h^{\mathsf{c}}_{\nth, \frac{K}{2}}}}{2} - \tam(f_{\frac{K}{2}},f_{\sfr, \nth}) \right|.
\end{equation}
Since the DMA approximation in \eqref{eq: approx bound} was derived based on a center frequency beamforming assumption, we will use the center frequency beamforming method in \eqref{eq: center beamformer} to evaluate the validity of the DMA approximation. 
\subsection{DMA successive beamforming algorithm}\label{subsec: successive alg}
We now develop a successive algorithm to configure the DMA elements and improve the DMA wideband spectral efficiency compared to the center-frequency beamformer. While the DMA approximation in \eqref{eq: approx bound} will help to give insights into wideband DMA element design, it does not provide insight into how the DMA element weights should be configured to improve the wideband spectral efficiency performance. The objective of the successive beamformer is to increase spectral efficiency by configuring each DMA element sequentially such that spectral efficiency is maximized for every subsequent element. This differs from the center-frequency algorithm which selects the DMA configuration solely based on the phase total channel.

To define the beamforming algorithm, let $f_{\sfr, \nth}^\sfS$ be the resonant frequency selected for the $\nth$th element of the successive beamformer, and let $U(f_k,f_{\sfr, \nth}) = \tam(f_k,f_{\sfr, \nth}) 
h_{\mathsf{att},\nth}[k] h_\nth[k]$ represent the per-element beamforming gain for the DMA with the wireless and waveguide channel. We define the selection process for the $\nth$th DMA element through the successive beamforming algorithm as
\begin{multline}\label{eq: successive}
f_{\sfr, \nth}^{\mathsf{S}} = \argmax\limits_{f_{\sfr, \nth} \in [f_{\sfr,{\mathsf{min}}},f_{\sfr,{\mathsf{max}}}]} \frac{1}{K}  \sum\limits_{k=1}^{K} \log_2 \big( 1+ \rho_k \big| U(f_k,f_{\sfr, \nth}) \\ + \sum\limits_{m=0}^{\nth-1}  U(f_k,f^\sfS_{\sfr, m}) \big|^2 \big).
\end{multline}
The successive algorithm in \eqref{eq: successive} first selects the resonant frequency that maximizes the wideband spectral efficiency for the DMA element closest to the waveguide feed. Then, with the chosen, fixed resonant frequency value of the first element, the successive algorithm then selects the resonant frequency for the second DMA element that maximizes the total spectral efficiency. We repeat this process until all resonant frequency values are selected to create the DMA successive beamforming vector.

The DMA successive algorithm has two primary advantages over the center-frequency beamforming algorithm. First, the successive algorithm maximizes the spectral efficiency over all frequencies in the bandwidth, rather than focusing solely on the center frequency. Second, the resonant frequency  selection process takes into account the previously-selected resonant frequencies to improve spectral efficiency, while the center-frequency algorithm simply attempts to match the DMA weight to the phase of the channel. We compare these two approaches in Section \ref{sec: results}.

\blue{We now compare the complexity of the proposed successive beamforming algorithm with the center-frequency beamforming algorithm. For both the center-frequency beamforming algorithm in \eqref{eq: center beamformer} and the successive beamforming algorithm in \eqref{eq: successive}, resonant frequency values are calculated from a continuous set $f_{\sfr} \in [f_{\sfr,{\mathsf{min}}},f_{\sfr,{\mathsf{max}}}]$. When implementing these algorithms computationally to calculate the resonant frequencies, however, the possible continuous resonant frequency values must be transformed into a discrete set with resolution $\Rres$, such that $f_{\sfr} \in \{f_{\sfr,{\mathsf{min}}},f_{\sfr,{\mathsf{min}}} + 1\frac{f_{\sfr,{\mathsf{max}}} - f_{\sfr,{\mathsf{min}}}}{\Rres-1}, f_{\sfr,{\mathsf{min}}} + 2\frac{f_{\sfr,{\mathsf{max}}} - f_{\sfr,{\mathsf{min}}}}{\Rres-1}, \ldots, f_{\sfr,{\mathsf{min}}} + (\Rres-2)\frac{f_{\sfr,{\mathsf{max}}} - f_{\sfr,{\mathsf{min}}}}{\Rres-1}, f_{\sfr,{\mathsf{max}}}\}$. To calculate the resonant frequency for each $\nth$th element with $\Nt$ total elements, the minimization for \eqref{eq: center beamformer} and maximization for \eqref{eq: successive} then occurs over the discrete resonant frequency set with $\Rres$ values. Therefore, the complexity of both the center-frequency beamforming algorithm and the successive beamforming algorithm scale as $O(\Nt \Rres)$ when performing the computations. This means that the proposed algorithm does not impose a greater complexity scaling than the baseline center-frequency beamforming algorithm. }

\blue{Lastly, we compare the proposed algorithm with prior work in Table \ref{table: prior work comparison} based on the various DMA modeling parameters that are incorporated into the algorithm. We find that our proposed successive beamforming algorithm is the only work that accounts for the frequency-selectivity of the DMA elements, tuning bandwidth, and waveguide power leakage all together, where \cite{smith2017analysis,boyarsky2021electronically} and \cite{WangEtAlDynamicMetasurfaceAntennasMIMOOFDM2021,HuangEtAlStructuredOFDMModulationXLMIMO2025} account for different combinations of these three parameters. Moreover, we have designed the successive beamforming algorithm to improve the spectral efficiency in wideband scenarios, which differs significantly from all prior work that has primarily focused on narrowband scenarios. We evaluate the performance of the proposed successive beamforming algorithm for large bandwidths in the next section. }


\begin{table}
\centering
\begin{center}
\caption{\blue{Summary of DMA model parameters incorporated into the beamforming algorithm design}}
\label{table: prior work comparison}
\begin{tabular}{|| >{\centering\arraybackslash}m{1.8cm} | >{\centering\arraybackslash}m{1.4cm} | >{\centering\arraybackslash}m{.9cm} | >{\centering\arraybackslash}m{1.2cm} | >{\centering\arraybackslash}m{0.8cm} || } 
\hline
 Parameter & \cite{deng2022reconfigurable,hwang2020binary,ShlezingerEtAlDynamicMetasurfaceAntennasUplink2019,WangJointTransmitterReceiverDesign2022,YouEtAlEnergyEfficiencyMaximizationMassive2022,ZhangEtAlBeamFocusingNearFieldMultiuser2022,KimaryoLeeDownlinkBeamformingDynamicMetasurface2023,AzarbahramEtAlEnergyBeamformingRFWireless2023,HuangEtAlJointMicrostripSelectionBeamforming2023} & \cite{smith2017analysis,boyarsky2021electronically} & \cite{WangEtAlDynamicMetasurfaceAntennasMIMOOFDM2021,HuangEtAlStructuredOFDMModulationXLMIMO2025} & This work \\ 
 \hline
 \hline
 Narrowband & \checkmark & \checkmark & \checkmark  & \checkmark \\ 
 \hline
 Wideband & & & & \checkmark \\ 
 \hline
 Frequency-selective &  & & \checkmark & \checkmark \\ 
 \hline
 Tuning bandwidth &  & \checkmark & & \checkmark \\ 
 \hline
 Waveguide power leakage & & \checkmark & & \checkmark \\ 
 \hline
\end{tabular}
\end{center}
\end{table}


\section{Data rate and spectral efficiency results}\label{sec: results}

We now examine the data rate and spectral efficiency results for the center-frequency DMA beamformer, the approximate model, and the DMA successive beamformer. For specific DMA design parameters, each value in the approximate model is calculated to approximate the total DMA beamforming gain for spectral efficiency and data rate results, while the center-frequency DMA beamformer and successive beamformer configures the DMA elements based on \eqref{eq: center beamformer} and \eqref{eq: successive}, respectively.

\subsection{Simulation setup and design considerations}\label{subsec: Design considerations}

In all cases, the spectral efficiency $C(\{ \rho_k, \heff[k]\}_{k=1}^K) $ is calculated using \eqref{eq: spec eff} with the configured DMA elements. The channel in every case is an LOS channel at a particular target direction $\phit$. Data rates $D$ are then calculated for the signal bandwidth $B$ as
\begin{align}
D = B C(\{ \rho_k, \heff[k]\}_{k=1}^K).
\end{align}
Design parameters for the DMA, such as tuning bandwidth, damping factor, and number of elements, will be varied and analyzed for their impact on the resulting spectral efficiency and data rates.

Next, we discuss the frequency-selective DMA and waveguide design parameters for simulations. As discussed in Lemma 6, leaky-wave antennas are typically designed such that a predetermined fraction $\Lambda$ of the input power is radiated. The desired leakage constant $\bar{\alpha}_\sfg$ at the center frequency for a typical leaky-wave antenna is \cite{oliner2007leaky}
\begin{equation}\label{eq: desired atten}
    \bar{\alpha}_\sfg(\ft) = \frac{\ln(1-\Lambda)}{2 \dx (\Nt-1)}.
\end{equation}
As an example, designing the DMA leakage constant $\atg$ for $\Lambda = 0.9$ in \eqref{eq: desired atten} ensures that $90\%$ of the input power is radiated. Calculating the attenuation due to the radiated power of DMA elements, however, is a more difficult task since each DMA element has its own unique element configuration and radiated power. We follow a procedure outlined in \cite{smith2017analysis} to determine the DMA waveguide leakage as a function of frequency based on an averaged value of the different element configurations. 

The attenuation model in \cite{smith2017analysis} implements ideal waveguide assumptions such that there is no power loss due to waveguide imperfections, and only power radiated through the DMA elements contributes to the electromagnetic field decay. A more complete model for the attenuation of the waveguide fields involves calculating the radiated electromagnetic fields from each DMA element and subtracting the radiated power from the waveguide fields. This method depends on the specific configuration of DMA element weights and complicates the analysis of the impact of waveguide field attenuation on the wideband performance significantly. Therefore, we use the averaged attenuation to simplify the wideband analysis and allow for the derivation of the leakage term $\Aleak$. Moreover, we assume that we can alter the DMA coupling factor $\Fcoup$ to achieve the desired attenuation constant $\atg$ for the DMA.

The normalization of the DMA beamformer is based on the radiated power per subcarrier bandwidth. For a typical fully-digital MISO system, a beamformer $\mathbf{f}[k]$ is subject to a power constraint for a normalization constant $M$ such that $\sum\limits_{k=1}^K ||\sqrt{M} \mathbf{f}[k]  ||^2 = K \Pin$. With this normalization comes the assumption that 100\% of the input power is radiated out through the antenna array. The radiated power for the DMA, however, is given by \cite{oliner2007leaky}
\begin{equation}\label{eq: prad}
    \Prad(f_k) = \Pin\left(1 - e^{-2\atg(f_k)\dx (\Nt-1)}\right)
\end{equation}
and describes the DMA radiated power across each subcarrier bandwidth, which does not guarantee that 100\% of the input power is radiated. We reformulate the power constraint for all subcarriers based on \eqref{eq: prad} and configure the power normalization constant $M_k$ for $\fdma[k]$ such that
\begin{equation}\label{eq: precoder norm}
||\sqrt{M_k} \fdma[k] \odot \mathbf{h}_\mathsf{att}[k]  ||^2 = \Pin \left(1 - e^{-2\atg(f_k)\dx (\Nt-1)}\right) , \; \forall k
\end{equation}
The power constraint in \eqref{eq: precoder norm} ensures that the frequency-selectivity of the radiated power in \eqref{eq: prad} for a specific DMA design is physically consistent with the DMA beamforming weights in $\fdma$ through the normalization constant $M_k$.

Since we have an averaged attenuation model, the normalization procedure comes with the assumption that regardless of the DMA element configuration, a certain radiated power value is met. Because of the waveguide phase advance, the DMA element configuration to steer a beam pattern will have a somewhat uniform distribution of the antenna weights around the Lorentzian-constraint, as discussed in Lemma 5. All DMA element configurations will have a very similar radiated power, allowing us to justify the radiated power assumption.

\subsection{Validation for the DMA beamforming gain approximation}\label{sec: validation}

\begin{figure} 
    \centering
  \subfloat[\label{fig: spec eff bw}]{%
       \includegraphics[width=\figsizeii\linewidth]{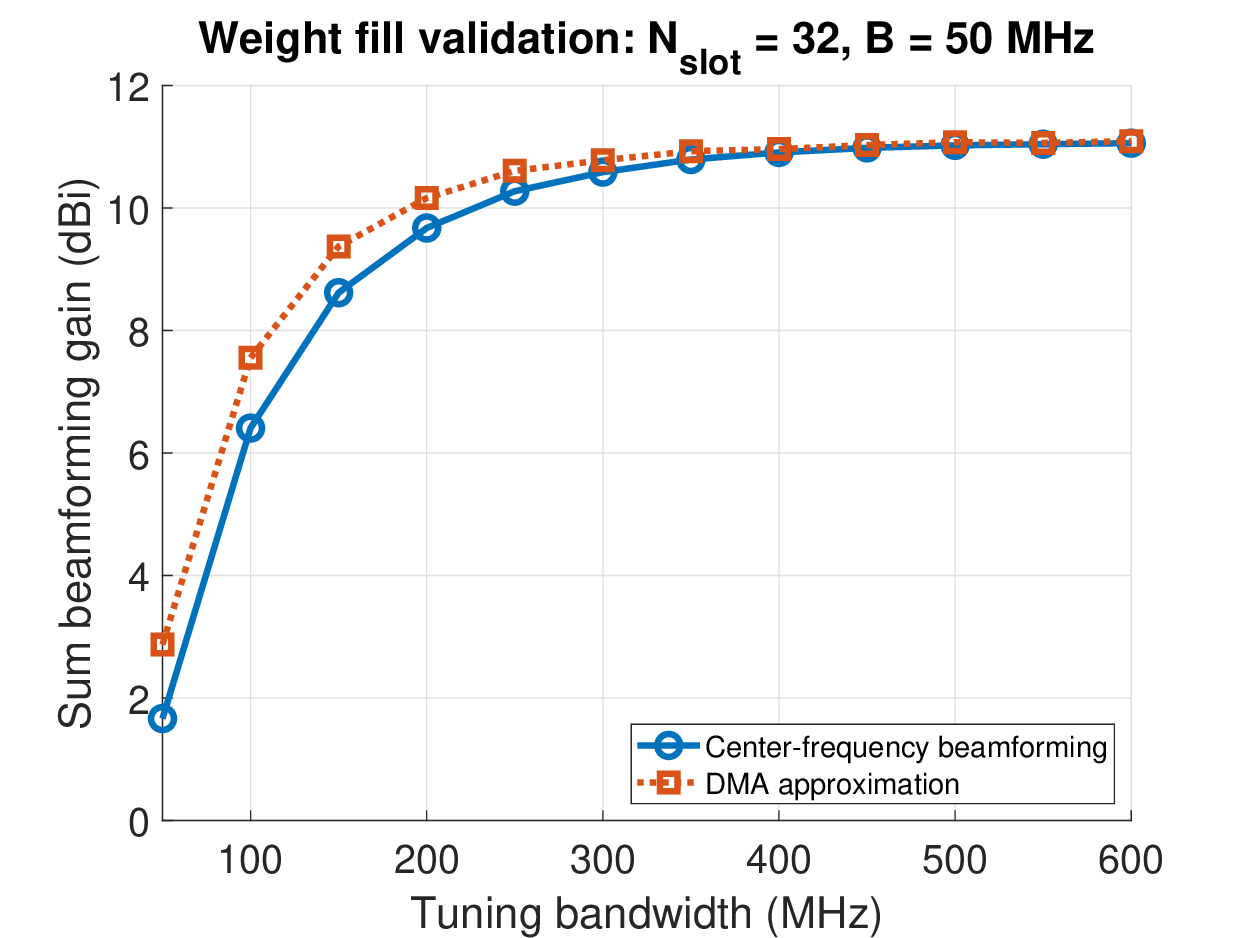}}
    \hfill
  \subfloat[\label{fig: data rate bw 2}]{%
        \includegraphics[width=\figsizeii\linewidth]{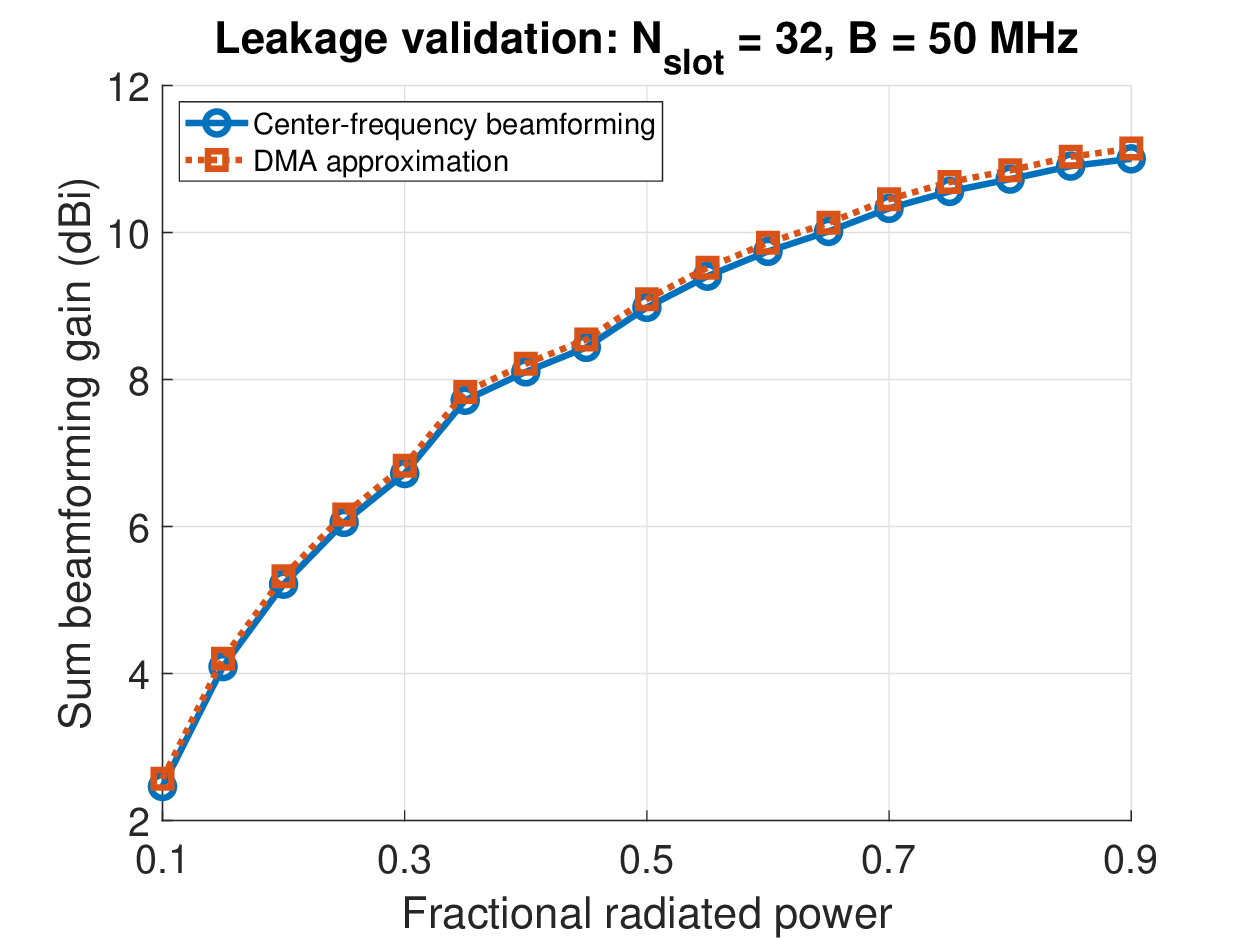}}
   \hfill
  \subfloat[\label{fig: data rate bw 3}]{%
        \includegraphics[width=\figsizeii\linewidth]{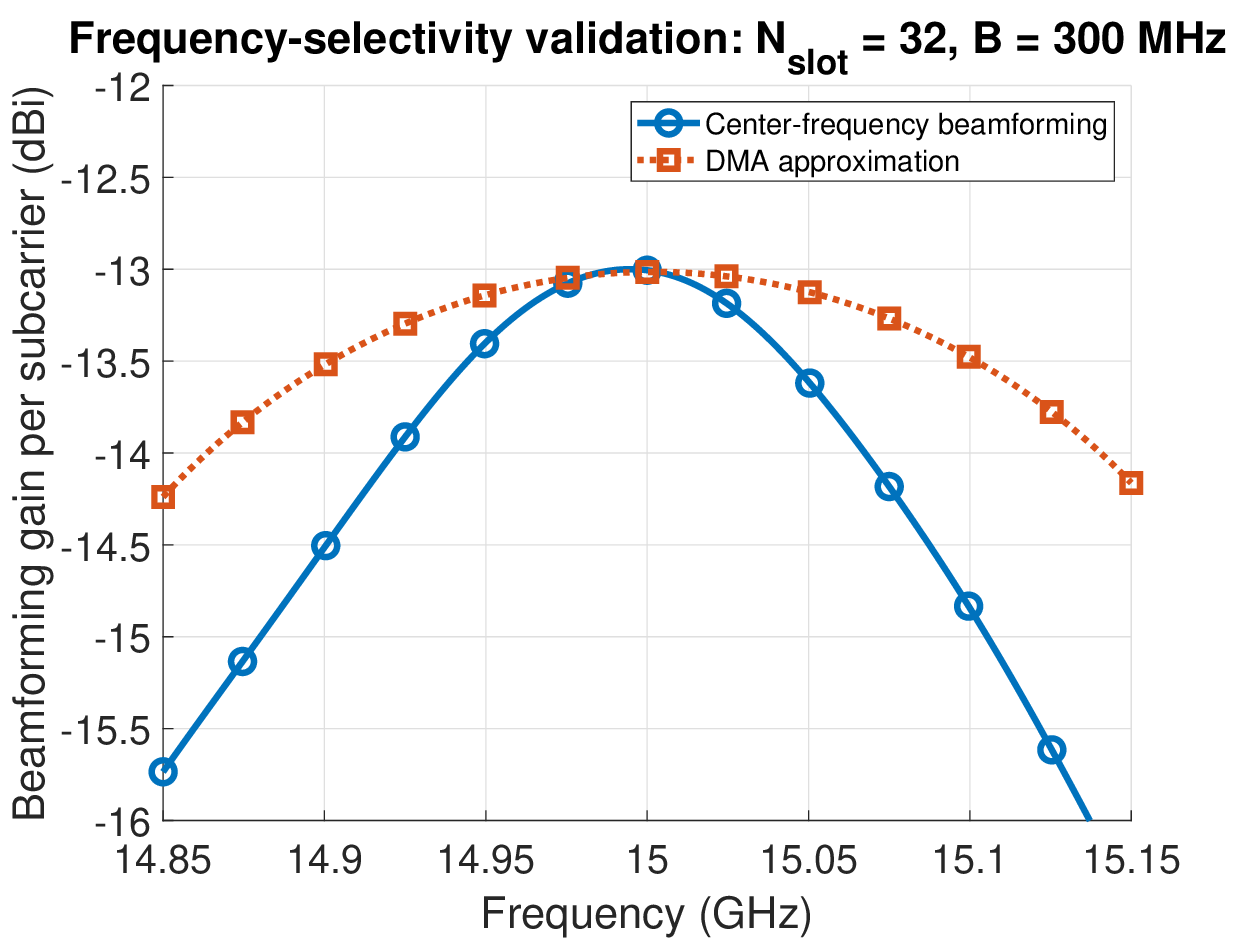}}

  \caption{We plot the validation for the (a) weight fill term, (b) leakage term, and (c) frequency-selectivity term using the DMA approximation against the center-frequency beamformer. We find that the weight fill and leakage terms provide an accurate model to approximate the wideband DMA performance, while the frequency-selectivity term is accurate at small signal bandwidths and becomes less accurate as the signal bandwidth increases.}
  \label{fig: validations} 
\end{figure}

Next, we validate the accuracy of the DMA beamforming gain approximation expression in \eqref{eq: approx bound}. Unless otherwise specified, we consider a DMA design that operates at $\ft = 15$ GHz, has $\Nt=32$ elements with an element spacing of $\dx = \lambda/4$, and has a quality factor of $Q = 100$. We will isolate the three approximation terms $\Fsel, \Wfill,\Aleak$ and compare the beamforming gain with the center-frequency beamformer to determine validity of each term separately. 

To validate the approximation in \eqref{eq: approx bound}, we denote the per-subcarrier beamforming gain for the center-frequency beamformer as $G_\mathsf{cf}[k] = |\heff^{\mathsf{T}}[k] \left(\fdma[k] \odot \mathbf{h}_{\mathsf{att}}[k] \right)|^2 $ and for the approximation as $G_\mathsf{approx}[k] = \Fsel(f_k)\Wfill(\xi) \Aleak(\Lambda)$. We then define the sum beamforming gain for the center-frequency beamformer as 
\begin{equation}
    G_\mathsf{cf,sum} = \sum\limits_{k=1}^{K} M_k|\heff^{\mathsf{T}}[k] \left(\fdma[k] \odot \mathbf{h}_{\mathsf{att}}[k] \right)|^2
\end{equation}
and for the approximation as 
\begin{equation}
    G_\mathsf{approx,sum} = \sum\limits_{k=1}^{K} M_k \Fsel(f_k)\Wfill(\xi) \Aleak(\Lambda).
\end{equation}
We validate the DMA approximation using the per-subcarrier and sum beamforming gain as follows.

First, we compare $G_\mathsf{cf,sum}$ and $G_\mathsf{approx,sum}$ for a small bandwidth to isolate the weight fill term $\Wfill$. For a small bandwidth and fixed radiated power, the frequency-selective term and and leakage term will not have a significant impact on the resulting sum beamforming gain, while the weight fill term $\Wfill$ changes significantly as a function of the tuning bandwidth $\Rtb$. We show the sum beamforming gain for the center-frequency beamforming and approximation in Fig. \ref{fig: validations}(a). As the tuning bandwidth increases, we find that the approximation closely matches the sum beamforming gain of the center-frequency beamformer, meaning that the weight fill term provides an accurate analysis of how the DMA beamforming gain changes as a function of the tuning bandwidth. As expected, we also find that as the tuning bandwidth grows, the sum beamforming gain increases due to the additional available beamforming weights.

Next, for a small signal bandwidth and large tuning bandwidth, we show the validation for the leakage term $\Aleak$ through the sum beamforming gain as a function of the fractional radiated power $\Lambda$ in Fig. \ref{fig: validations}(b). Here, we see that the approximation very closely matches the center-frequency beamformer and that the sum beamforming gain increases as we increase the power radiated from the DMA. 

Lastly, for a large tuning bandwidth and fixed radiated power, Fig. \ref{fig: validations}(c) shows the per subcarrier beamforming gain as a function of the subcarrier frequency. Using a very large tuning bandwidth, we isolate the frequency-selective term $\Fsel(f_k)$ to compare with the center-frequency beamformer. In Fig. \ref{fig: validations}(c) at frequencies near the center frequency $\ft = 15$ GHz, the per subcarrier beamforming gain matches very closely to the center-frequency beamforming method and becomes less accurate at subcarrier frequencies further from the center frequency. This is due to the linear approximation for the DMA frequency-selectivity in \eqref{eq: ang pol angle approx}, which becomes less accurate as the subcarrier frequency deviates further from the center frequency. Overall, the leakage term $\Aleak$ and the weight fill term $\Wfill$ provide an accurate model for the DMA beamforming gain as the tuning bandwidth and fractional radiated power vary, while the frequency-selectivity term $\Fsel$ is accurate at smaller signal bandwidths and becomes less accurate at larger signal bandwidths.

\subsection{Simulation results for the approximation and successive DMA beamformer}\label{sec: successive results}

\begin{figure} 
    \centering
  \subfloat[\label{fig: spec eff bw 1}]{%
       \includegraphics[width=\figsizeii\linewidth]{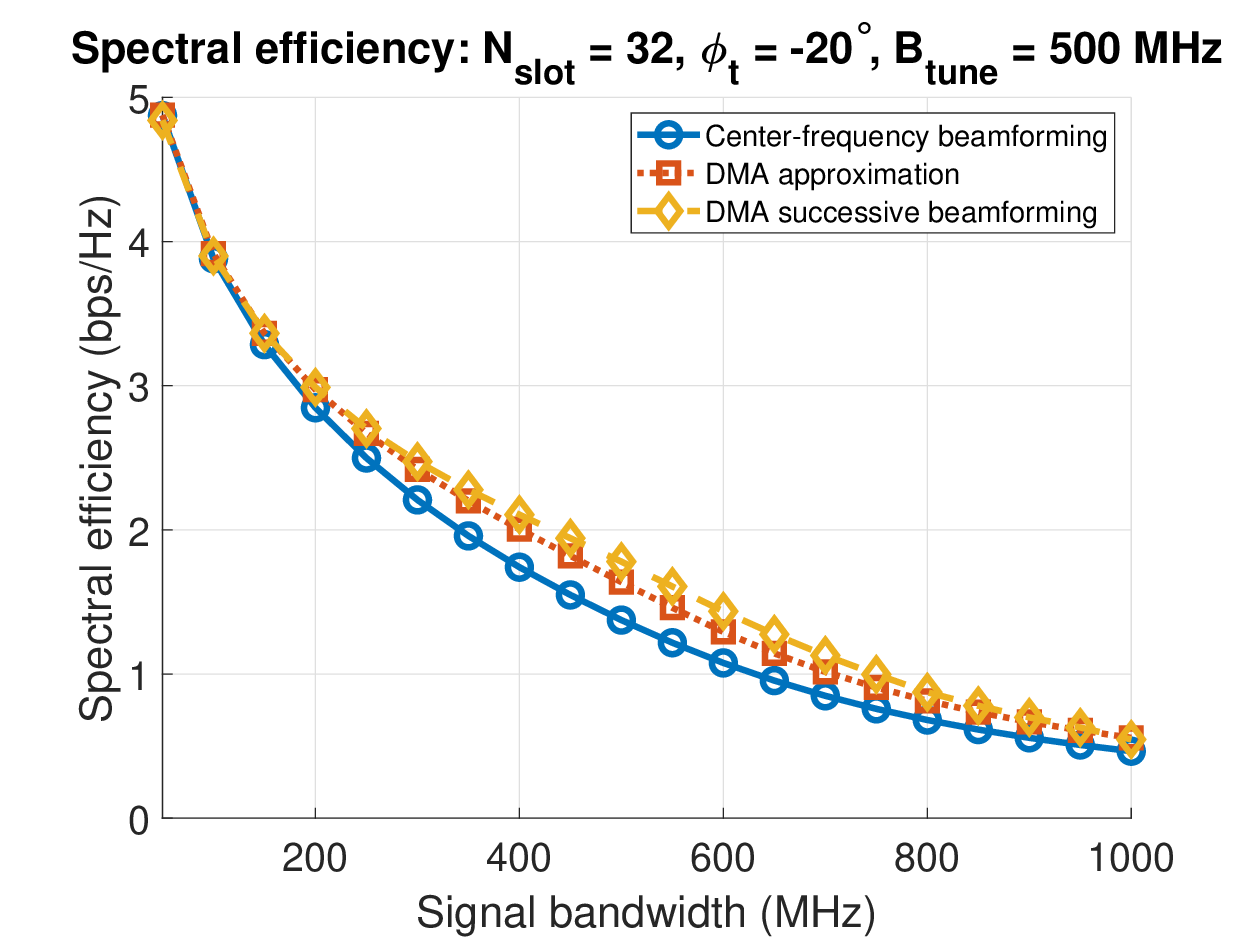}}
    \hfill
  \subfloat[\label{fig: data rate bw 1}]{%
        \includegraphics[width=\figsizeii\linewidth]{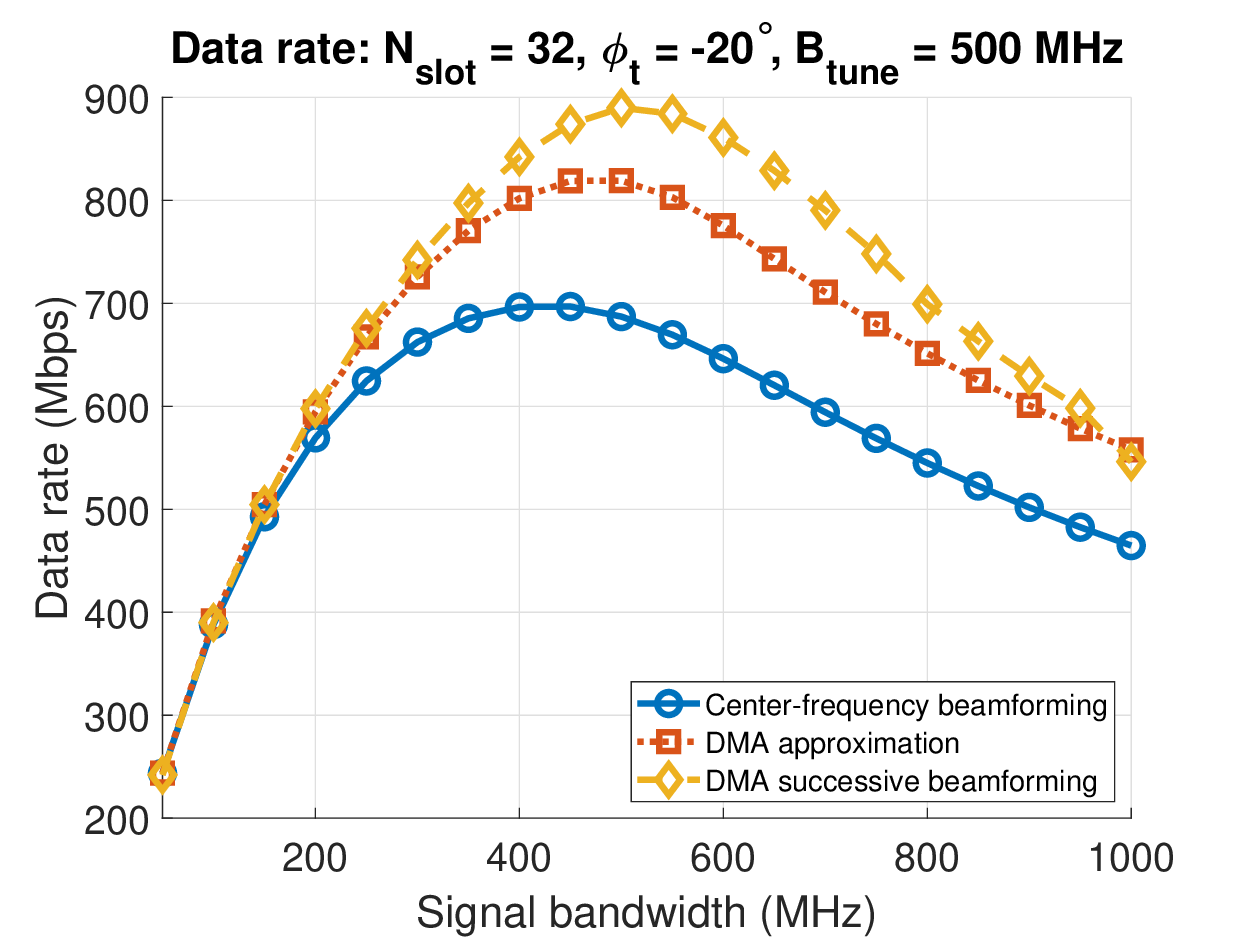}}

  \caption{We plot  spectral efficiency in (a) and  data rate in (b) as a function of the signal bandwidth. We find that spectral efficiency decreases with signal bandwidth due to the noise power term and increased frequency-selectivity of the DMA elements. We also see that the successive algorithm provides better data rates and spectral efficiency than the center-frequency beamformer.}
  \label{fig: wideband data rates} 
\end{figure}

We now investigate the performance of the DMA successive algorithm compared to the baseline center-frequency beamforming performance. Fig. \ref{fig: wideband data rates}(a) shows the DMA wideband spectral efficiency as a function of signal bandwidth for a large tuning bandwidth. As expected, we find that the spectral efficiency decreases with signal bandwidth due to the increased noise power and higher DMA frequency selectivity. Given the large tuning bandwidth, the DMA successive algorithm is able to increase the spectral efficiency compared to the baseline center-frequency beamformer since the successive algorithm better configures the DMA elements across a wide signal bandwidth.

\begin{figure}
    \centering
    \includegraphics[width=\figsizeii\linewidth]{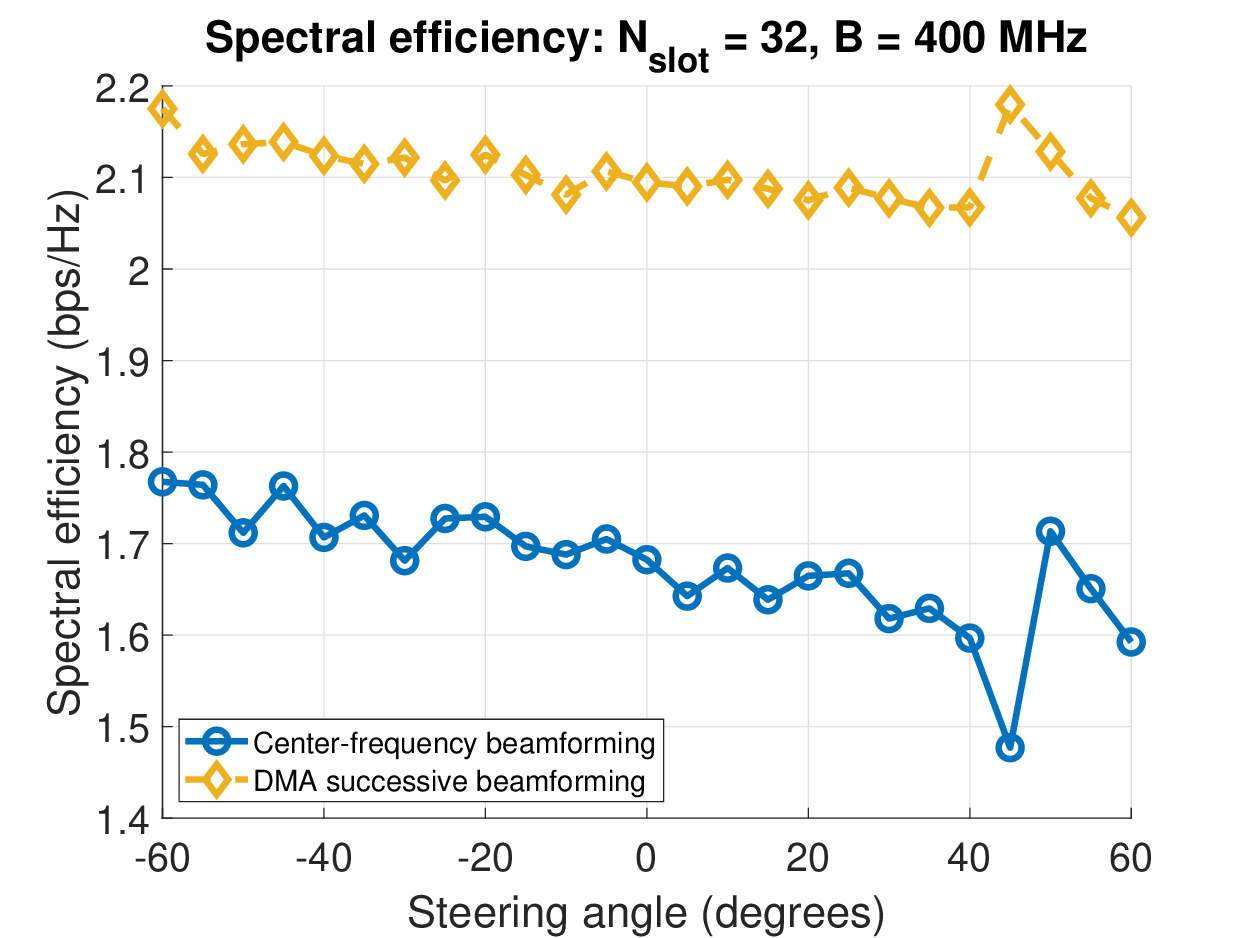}
    \caption{Spectral efficiency as a function of the steering angle $\phit$. As the steering angle varies, the spectral efficiency does not change significantly for both the center-frequency beamformer and the successive beamformer.}
    \label{fig: spec eff angles}
\end{figure}

Fig. \ref{fig: wideband data rates}(b) also shows the data rates for the DMA wideband scenario. We see that as the signal bandwidth increases, the data rate continues to increase until the DMA elements exhibit very large frequency-selectivity. From the attenuation analysis, we see that the tuning bandwidth provides an approximate limit for the maximum data rate, since the DMA cannot support frequencies beyond the tuning bandwidth. The approximation holds for smaller signal bandwidths, but begins to breakdown at larger signal bandwidths due to the magnetic polarizability approximation in \eqref{eq: ang pol angle approx}. We use the insights in the approximation to understand how the frequency-selectivity, tuning bandwidth, and attenuation impact the DMA wideband performance in Section \ref{subsec: impact of dma el spacing}. Lastly, Fig. \ref{fig: spec eff angles} shows the spectral efficiency as a function of the steering angle. We find that the spectral efficiency does not change significantly as the steering angle varies, meaning the we can extend the insights gained from Fig. \ref{fig: wideband data rates} at the angle $\phit = -20^\circ$ to all steering angles.

We now discuss the main benefit of the DMA successive algorithm to enhance the achievable data rates of the DMA wideband system. Fig. \ref{fig: max data rates} shows the maximum data rate achievable for a given tuning bandwidth as $D_{\mathsf{max}} = \max\limits_{B} D(B,\Rtb)$. For both the DMA center-frequency beamformer and the successive algorithm, we find that there is ultimately a data rate plateau as we increase the DMA tuning bandwidth. This is due to the high frequency-selectivity of the DMA system, where increasing the tuning bandwidth no longer yields greater wideband beamforming capabilities. We find that the successive algorithm, however, can significantly extend the maximum achievable data rate at higher signal bandwidths by configuring the frequency-selective DMA elements for the wideband signal. Therefore, the proposed successive algorithm can perform significantly better than the baseline algorithm for a large signal bandwidth and tuning bandwidth.

\begin{figure}
    \centering
    \includegraphics[width=\figsizeii\linewidth]{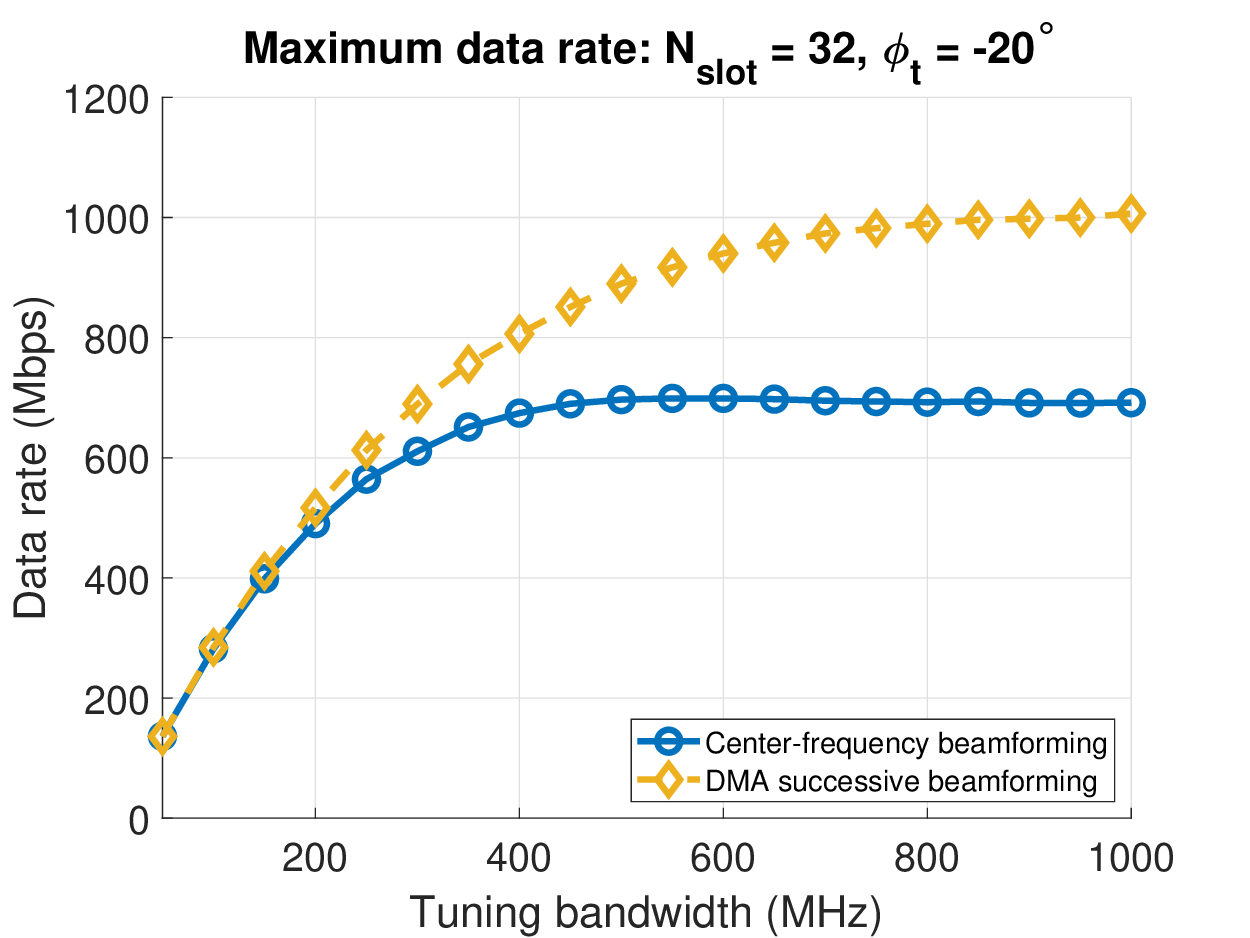}
    \caption{Maximum data rate results comparing the center-frequency beamforming and DMA successive algorithm. The DMA successive algorithm allows for larger data rates and signal bandwidths as it can better configure the DMA elements for a wideband frequency-selective scenario.}
    \label{fig: max data rates}
\end{figure}

\subsection{Impact of DMA element spacing and damping factor}\label{subsec: impact of dma el spacing}

\begin{figure}
    \centering
    \includegraphics[width=\figsizeii\linewidth]{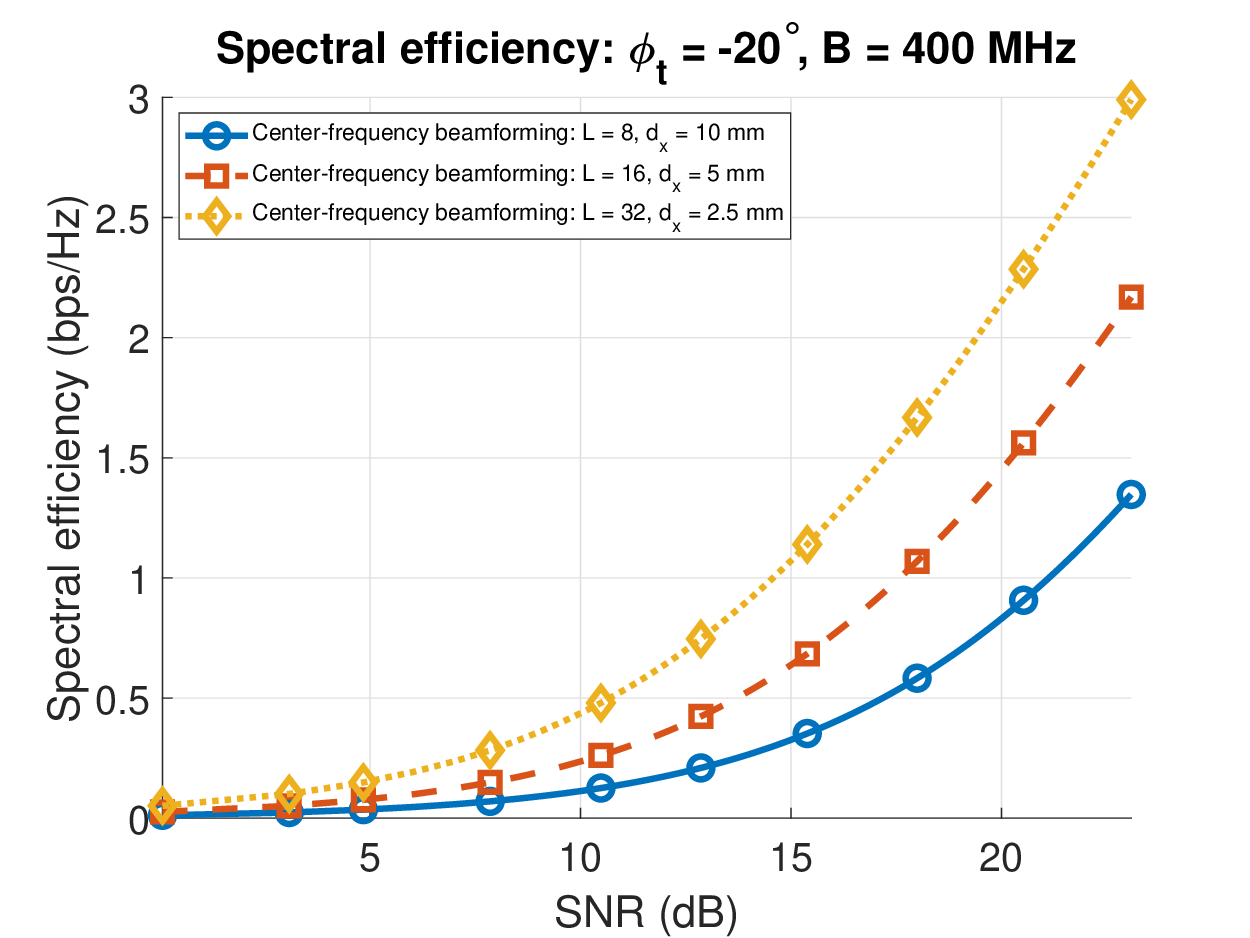}
    \caption{DMA element spacing analysis for spectral efficiency. As the approximation predicts, smaller DMA element spacing leads to lower frequency-selectivity and higher spectral efficiency. As DMAs are impacted more by the beam-squint effect than phased arrays, it is important to minimize DMA element spacing to mitigate beam-squint.}
    \label{fig: spacing}
\end{figure}

We now look at the impact of the DMA element spacing and the damping factor on the resulting DMA wideband spectral efficiency. Fig. \ref{fig: spacing} shows spectral efficiency results for the center-frequency beamformer as a function of the DMA element spacing. From the frequency-selectivity term in \eqref{eq: freq select}, we expect the frequency-selectivity to increase as the DMA element spacing increases. Lower element spacing will be less frequency-selective and result in improved spectral efficiency. Fig. \ref{fig: spacing} verifies this relationship, where spectral efficiency increases as $\dx$ decreases, while maintaining the same overall DMA length. This also leads to key insight on how the DMA frequency-selectivity compares to a phased array. The addition of the waveguide channel greatly increases the frequency-selectivity. Therefore, small dense element spacing is crucial for a DMA wideband system to minimize the impact of the frequency-selective waveguide and wireless channel.

Next, we examine the effects of the damping factor on the DMA performance. Fig. \ref{fig: damping fac} shows the data rate results for a high, medium and low damping factor. We define the damping factor in terms of the Q-factor at the center frequency as $\Gamma = \frac{2\pi \ft}{ Q}$. Larger damping factors leads to lower amounts of frequency-selectivity in the magnetic polarizability resonance curve but also a lower weight fill ratio. We show in Fig. \ref{fig: damping fac} that a large damping factor leads to the highest data rates for a large tuning bandwidth due to lower frequency-selectivity.

\begin{figure}
    \centering
    \includegraphics[width=\figsizeii\linewidth]{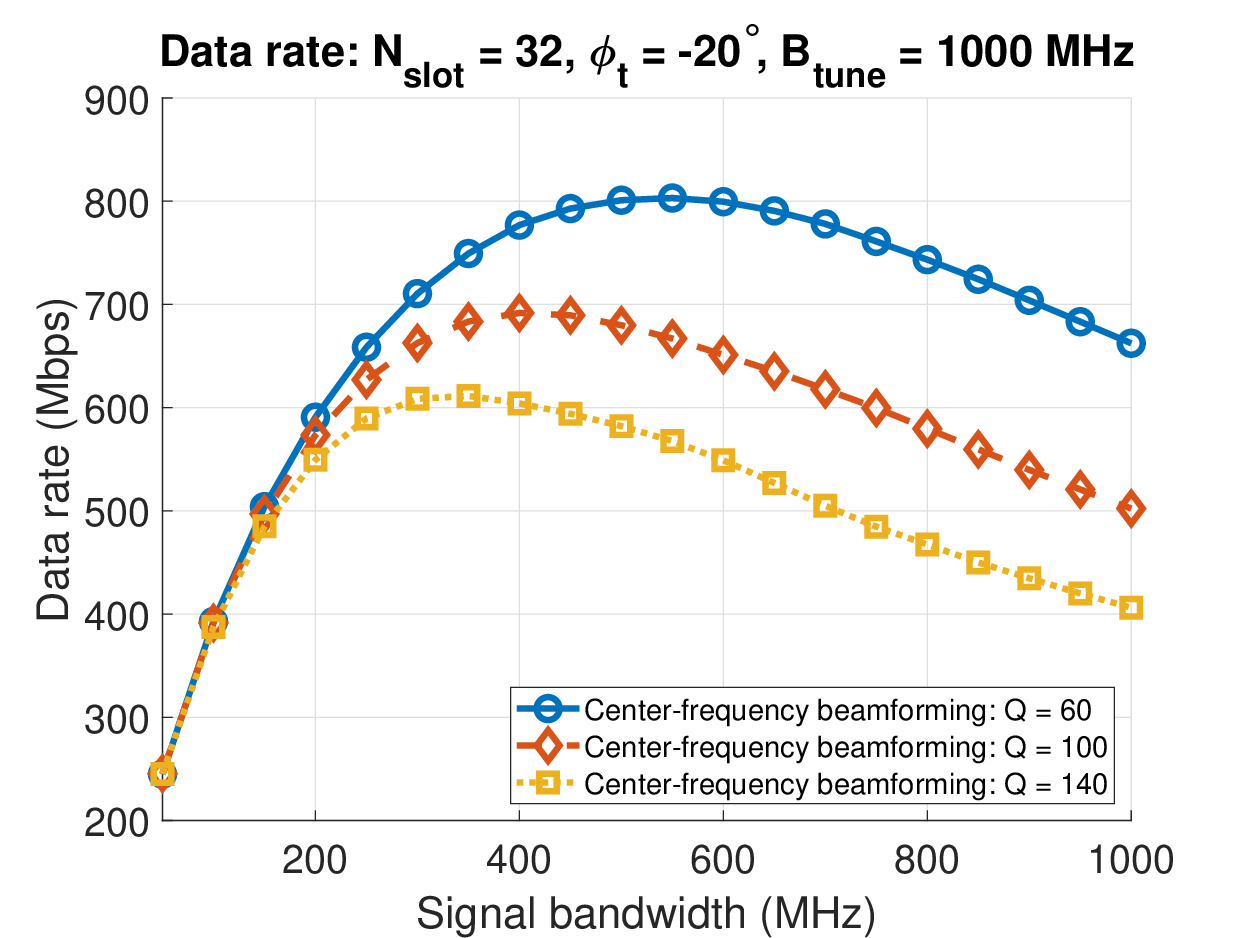}
    \caption{Data rates for the center-frequency beamformer and different damping factors. We show that larger damping factors lead to improved data rates due to the lower impact of frequency-selectivity.}
    \label{fig: damping fac}
\end{figure}

\blue{\subsection{Impact of multipath on DMA beamforming performance}\label{subsec: multipath}

Lastly, we determine the effectiveness of the proposed successive beamforming algorithm in a multipath environment and make comparisons to hybrid and analog beamforming architectures. Thus far, we have primarily considered an LOS scenario to derive the wideband beamforming gain approximation and extract important design insights for a wideband DMA system. Since practical channel environments often involve multiple propagation paths, we include multipath channel simulation results to ensure that our proposed algorithm remains effective in various channel environments. A more in-depth analysis on the impact of multipath environments for DMA wideband beamforming can be found in our related work \cite{j_icc}. Here, we extend the work in \cite{j_icc} to focus on the comparison of DMAs with typical hybrid beamforming architectures that employ passive phase shifters.

Next, we discuss the simulation parameters and setup. The multipath channel model is outlined in \cite{j_icc}, which includes $L_\mathsf{path}$ channel paths and frequency-selective filtering effects. We do not repeat the multipath channel model here for brevity. We also consider a uniform planar array (UPA) architecture at the transmitter instead of a ULA, where antenna elements are also placed in the $y$ direction with spacing $d_\mathsf{y}$ to form subarrays. For the hybrid beamforming architecture and analog-only architecture, let there be $N_\mathsf{ele}$ antenna elements per subarray with spacing $\dx=\lambda/2$ in the $x$ direction, and $N_\mathsf{rf}$ subarrays with spacing $d_\mathsf{y}=\lambda/2$ in the $y$ direction. Although a spacing of $\dx = \lambda/2$ is common in phased arrays due to antenna sizes, mutual coupling, and grating lobe mitigation, DMA elements are typically much smaller than traditional antenna elements and can allow for tight element spacing. This is due to a combination of the metasurface-inspired geometry as well as the slotted-waveguide implementation of DMAs. Because DMA elements can be placed more densely than typical antenna elements, let there be $2N_\mathsf{ele}$ antenna elements per subarray with spacing $\dx=\lambda/4$ in the $x$ direction, and $N_\mathsf{rf}$ subarrays with spacing $d_\mathsf{y}=\lambda/2$ in the $y$ direction. While the DMA will contain twice the number of antenna elements, the aperture area of both the DMA and hybrid architecture will be the same to maintain consistency between the two architectures. We calculate the beamforming vectors for both a partially-connected and fully-connected hybrid architecture based on \cite{ParkEtAlDynamicSubarraysHybridPrecoding2017,castellanos2023energy}. For the hybrid architectures, the number of subarrays $N_\mathsf{rf}$ dictates the number of RF chains. Lastly, it is crucial to account for the loss due to RF components when comparing DMAs with a hybrid architecture. We use the component loss and SNR model outlined in \cite{RibeiroEtAlEnergyEfficiencyMmWaveMassive2018} to determine the loss due to the passive phase shifters and power dividers for the DMA, analog-only, and hybrid architectures. We assume the loss due to the passive phase shifters is $8.8$ dB \cite{RibeiroEtAlEnergyEfficiencyMmWaveMassive2018}.

For the multipath channel environment with $N_\mathsf{ele} = 6$ elements, $N_\mathsf{rf} = 4$ RF chains, and $L_\mathsf{path} = 4$ channel paths, Fig. \ref{fig: spec eff multipath} shows the resulting spectral efficiency for the DMA and hybrid architectures through a Monte Carlo simulation. We vary the total transmit power $P_\mathsf{t}$, which modifies the input power $\Pin$ based on the component loss $L_\mathsf{loss}$ as $P_\mathsf{in} = L_\mathsf{loss}P_\mathsf{t}$. This is further discussed in \cite{carlson2023hierarchical}. Because the multipath environment introduces more frequency-selectivity into the wireless channel, we find here that the proposed DMA successive beamforming algorithm significantly outperforms the DMA center-frequency beamforming algorithm. This shows that our proposed algorithm remains very effective for channel environments beyond the LOS scenario. Moreover, we find that the DMA outperforms both the hybrid architectures and the analog-only architecture in terms of spectral efficiency. This is primarily due to the component loss in the passive phase shifters, which is set to $8.8$ dB \cite{carlson2023hierarchical} and contributes to a significant loss in the SNR compared with DMAs that employ no phase shifters. Overall, the proposed successive beamforming algorithm provides a large performance improvement in wideband wireless systems compared to the baseline DMA beamforming method and hybrid architectures that use passive, lossy phase shifters.

\begin{figure}
    \centering
    \includegraphics[width=\figsizeii\linewidth]{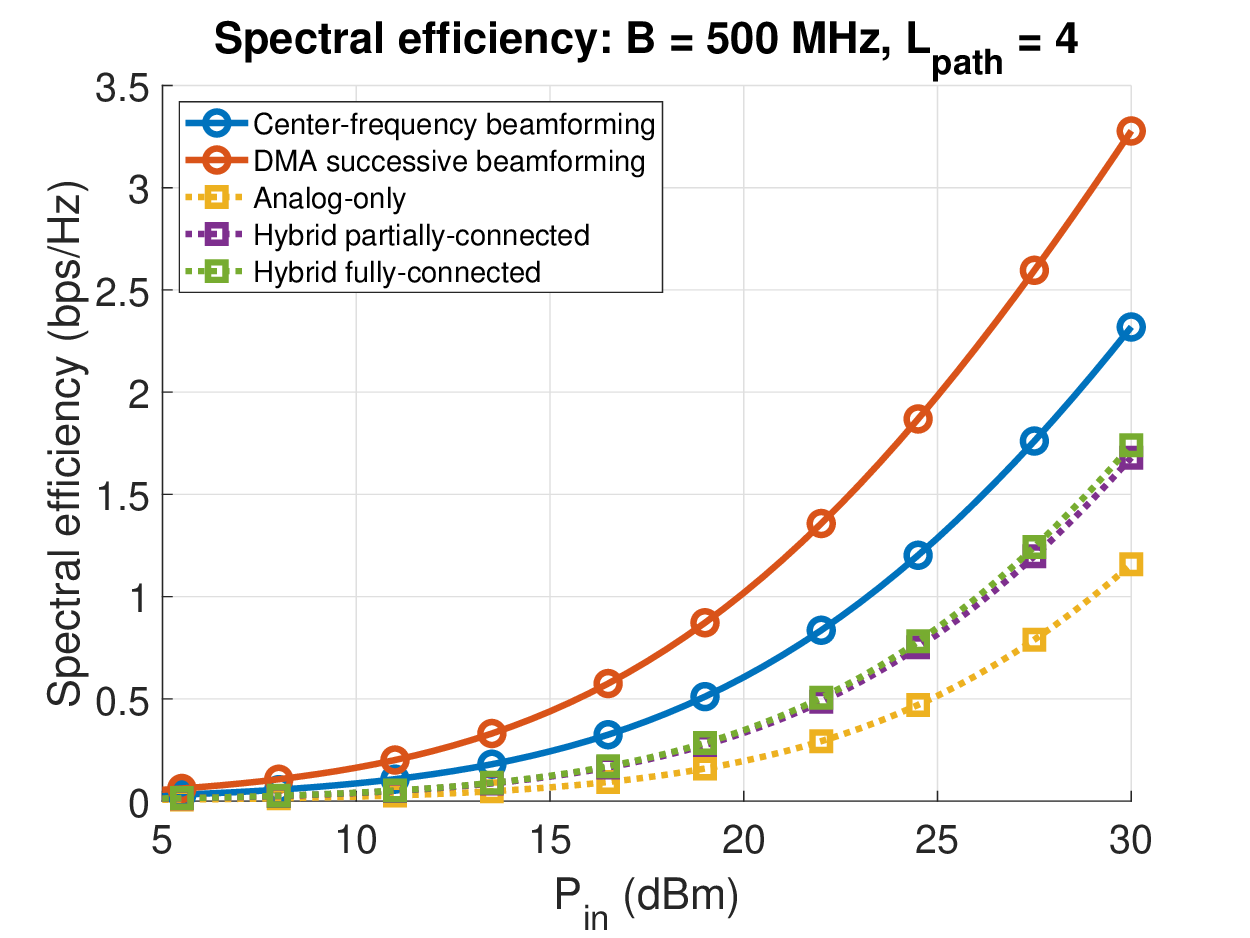}
    \caption{\blue{Spectral efficiency results for a multipath channel environment with $N_\mathsf{ele} = 6$ elements per subarray and $N_\mathsf{rf} = 4$ subarrays. We find that the proposed DMA successive beamforming algorithm outperforms both the baseline DMA center-frequency beamforming method and the hybrid architectures. This is due to the flexible algorithm design that adapts to the DMA and channel frequency-selectivity, dense DMA element spacing, and high component loss from passive phase shifters for the hybrid architecture.}}
    \label{fig: spec eff multipath}
\end{figure}

Fig. \ref{fig: spec eff multipath paths} shows simulation results for the average spectral efficiency as a function of the number of channel paths in the multipath environment. We assume the same array architectures for both the DMA and hybrid cases as in Fig. \ref{fig: spec eff multipath}, where $N_\mathsf{ele} = 6$ elements and $N_\mathsf{rf} = 4$ RF chains. We find that the proposed successive beamforming algorithm outperforms both the center-frequency algorithm and hybrid beamforming architectures in terms of spectral efficiency across every number of channel paths. Notably, as the number of channel paths increases, the gap between the DMA successive beamforming algorithm and the center-frequency beamforming algorithm also grows wider. This shows how the successive beamforming algorithm adapts better to a more complicated channel environment with multipath and additional frequency-selectivity. Similar to the results in Fig. \ref{fig: spec eff multipath}, the large performance difference between the DMA successive beamforming algorithm and the hybrid architecture lies mostly in the component loss for the integrated phase shifters and power dividers. Since the partially-connected architecture contains fewer power dividers than the fully-connected case, this is also why the partially-connected architecture outperforms the fully-connected architecture for the scenario with two paths. In summary, both Figs. \ref{fig: spec eff multipath} and \ref{fig: spec eff multipath paths} demonstrate the resilience of the proposed beamforming algorithm to complicated channel environments, showing that DMAs provide a promising alternative to hybrid arrays for wideband wireless communications.



\begin{figure}
    \centering
    \includegraphics[width=\figsizeii\linewidth]{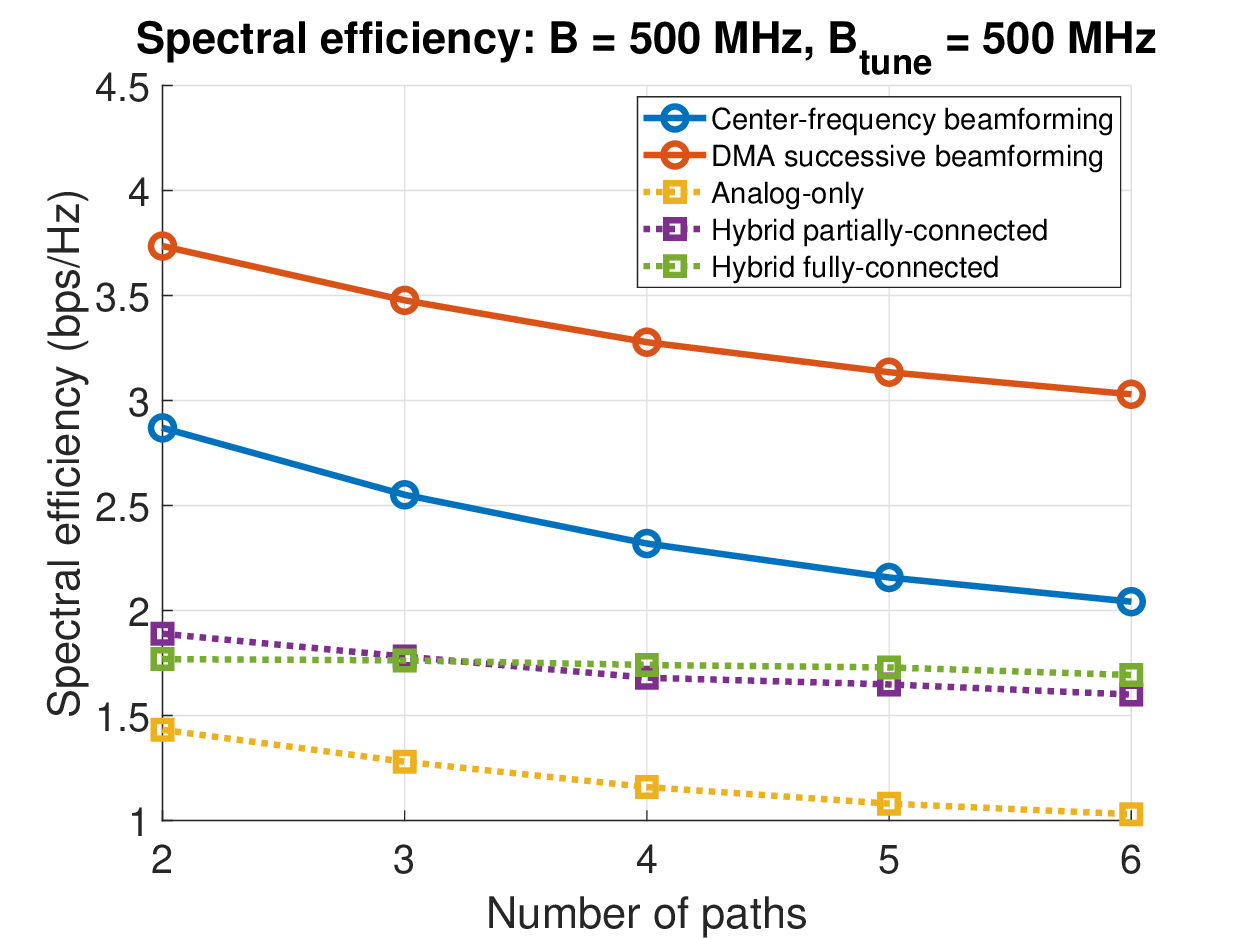}
    \caption{\blue{Average spectral efficiency  for a multipath channel environment as a function of the number of channel paths. We find that the proposed DMA successive beamforming algorithm provides the best performance across every number of channel paths.}}
    \label{fig: spec eff multipath paths}
\end{figure}

}

\section{Conclusions and future work}\label{sec: conclusion}
In this paper, we analyzed how DMA design characteristics impact DMA beamforming gain for large signal bandwidths. With the exception of very small signal bandwidths, we found that the frequency-selective DMA resonance, wireless channel, and waveguide channel effects cannot be neglected since the per-subcarrier beamforming gain degrades quickly outside the center frequency. We also showed that reducing the DMA element spacing can help decrease the amount of frequency-selectivity and improve the beamforming gain. Moreover, we illustrated how the tuning bandwidth plays a crucial role in maximizing the DMA performance by allowing for the entire set of Lorentzian-constrained weights to be available for beamforming. Lastly, we proposed a successive DMA beamforming algorithm that significantly improved the wideband DMA data rates by configuring the DMA elements to consider their frequency-selective response and the previously-configured DMA elements. We also demonstrated the effectiveness of the proposed algorithm in a multipath setting to show how the proposed algorithm can continue to improve spectral efficiency in complicated channel environments. A major limitation of our work is the that the DMA beamforming gain approximation was derived solely for a LOS channel. In future work, we plan to generalize the DMA approximation approach to non-LOS channels, which is non-trivial due to the additional frequency-selectivity and fading from a multi-path environment. We also plan to extend the successive beamforming algorithm to consider more complex optimization algorithms, which is challenging due to the non-convex nature of optimizing spectral efficiency under the Lorentzian-constrained weights.

\appendices

\section{Proof of Lemma~\ref{lem:  mag pol}}\label{proof:  lem mag pol}

	We derive here the frequency-selective Lorentzian constraint. We begin with the formulation of the magnetic polarizability in \eqref{eq: mag pol cos}. We omit the dependence of the magnetic polarizability $\tam$ and $\Psi$ on the subcarrier frequency and resonant frequency for brevity.  We then apply the identity $\cos(\Psi) = 1-2\sin^2(\frac{\Psi}{2})$ to rewrite \eqref{eq: mag pol cos} as 
	\begin{equation}\label{eq: mag pol lemma}
		\tam = \left( 1-2\sin^2\left(\frac{\Psi}{2}\right) \right)e^{\sfj(\Psi-\frac{\pi}{2})}.
	\end{equation}
	We then use the identity $\sin^2(\frac{\Psi}{2}) = \left( \frac{e^{\sfj \frac{\Psi}{2}}-e^{-\sfj \frac{\Psi}{2}}}{2\sfj} \right) ^2$ to obtain
	\begin{equation}
		\tam= -\frac{\sfj-e^{\sfj 2 \Psi}}{2}.
	\end{equation}

\section{Proof of Lemma~\ref{lem: 2}}\label{proof of lemma 2}

	We now define an approximation for the beamforming gain of a DMA. We omit the frequency-selectivity of the channel and DMA beamformer for brevity in this derivation. 
    To derive the beamforming gain $|\heff^{\mathsf{T}} \fdma|^2$, we find that we can split up the beamforming gain into two terms as
\begin{equation}
    |\heff^{\mathsf{T}} \fdma|^2 = \left|-\frac{\sfj}{2}\heff^{\mathsf{T}} \mathbf{1}+\frac{1}{2}\heff^{\mathsf{T}}e^{\sfj 2 \mathbf{\Psi}} \right|^2.
\end{equation}
For a wavenumber $k_0$, we define the propagation angle $\tg = \arcsin\left( \frac{\btg}{k_0} \right)$ as the propagation lobe created by the effective channel $-\frac{\sfj}{2}\heff^{\mathsf{T}} \mathbf{1}$, and the main lobe as $\frac{1}{2}\heff^{\mathsf{T}}e^{\sfj 2 \mathbf{\Psi}}$. We assume single-beam beamsteering to steer a main lobe in a desired direction $\phit$ for an LOS channel. We model the effective channel for the phase $\phio$ as $\heff = [e^{\sfj 0 \phio}, e^{ \sfj 1 \phio},\ldots, e^{\sfj (\Nt-1) \phio}]^T$. The effective channel phase is given by wireless channel and DMA channel phases as
	\begin{equation}\label{eq: effect channel phase}
		\phio= \dx \left( \frac{2 \pi \ft}{c}  \sin\phit + \btg \right).
	\end{equation}
	The normalized propagation lobe term $\Gp(\phit) = -\frac{\sfj}{2}\heff^{\mathsf{T}} \mathbf{1}$ can then be written as
	\begin{multline}\label{eq: prop lobe term}
		\Gp(\phit)  =\frac{1}{\Nt}\sum\limits_{\nth=1}^{\Nt} e^{\sfj (\nth-1) \phio} \\ = e^{\sfj \frac{(\Nt-1) \phio}{2}} \frac{1}{\Nt}\frac{\sin \frac{\Nt}{2} \phio}{\sin \frac{1}{2} \phio}.
	\end{multline}
	In the limit of a large number of elements as $\Nt \rightarrow \infty$, the magnitude of the propagation lobe term forms an infinitely thin sinc function and can be approximated as \cite{mack2007fundamental}
	\begin{equation}
		\lim\limits_{\Nt \rightarrow \infty} |\Gp(\phit)| \approx \begin{cases}
			&1, \quad \text{if } \phit=\tg, \\
			&0, \quad \text{if } \phit\neq \tg.
		\end{cases}
	\end{equation}
	 Therefore, we find that for a sufficiently large $\Nt$ and $\phit\neq \tg$, the propagation lobe term $\Gp(\phit) \approx 0$ and
	\begin{equation}\label{eq: prop term approx}
		\left| \heff^{\mathsf{T}} \fdma \right|^2 \approx \left|\frac{1}{2}\heff^{\mathsf{T}}e^{\sfj 2 \mathbf{\Psi}} \right|^2.
	\end{equation}
    We can also avoid the propagation lobe altogether with the condition $\btg > k_0$ to ensure that the beamforming gain due to the propagation lobe is very small, since $\tg = \arcsin\left( \frac{\btg}{k_0} \right) \in \mathbb{C}$. For now, we assume that $\btg > k_0$ and $\Nt$ is sufficiently large such that \eqref{eq: prop term approx} holds.
	
\section{Proof of Lemma~\ref{lem: 3}}\label{proof: lem 3}

	We derive an approximation for the angle of the magnetic polarizability. We apply a Taylor series expansion to the polarizability function in \eqref{eq: mag pol} around $f=f_{\sfr, \nth}$. Let $\Psi(f)=\arctan \left( \frac{2\pi (f_{\sfr, \nth}^2-f^2)}{\Gamma f} \right) $. We obtain 
	\begin{align}
		\Psi(f) =\Psi(f_{\sfr, \nth})+  \Psi'(f_{\sfr, \nth})(f-f_{\sfr, \nth})+ O(f^2).
	\end{align}
	It can be verified that $\Psi'(f_{\sfr, \nth})=-\frac{4\pi }{\Gamma}$.

\section{Proof of Lemma~\ref{lem: 4}}\label{proof: lem4}
We determine an approximation for the impact of frequency selectivity on the DMA beamforming gain. Based on the channel phase in \eqref{eq: chin} and the linear DMA weight approximation in \eqref{eq: ang pol angle approx}, we can decompose the frequency-selective DMA beamforming gain as
        \begin{equation}
		\Fsel(f_k)   = \left| \frac{1}{2} \frac{1}{\sqrt{\Nt}}\sum\limits_{\nth=1}^{\Nt} e^{\sfj \frac{8\pi \Delta_k}{\Gamma}}e^{{\sfj}(\nth-1)\chi_{\sfo}[k]} \right|^2,
	\end{equation}
	which simplifies to
	\begin{equation}
		\Fsel(f_k)  = \left| \frac{1}{2}  \frac{1}{\sqrt{\Nt}} \frac{\sin\left( \frac{\Nt}{2}\chi_{\sfo}[k]\right)}{\sin \left( \frac{1}{2}\chi_{\sfo}[k] \right)}\right|^2.
	\end{equation}
	
	\section{Proof of Lemma~\ref{lem: 5}}\label{proof: lem5}

		We determine the impact of the tuning bandwidth and available DMA weights on the DMA beamforming gain. For a given weight fill ratio, we assume the center-frequency DMA beamformer and assume any channel phase outside the Lorentzian-constrained weight range maps to the outermost DMA weight, as this minimizes the phase error between the two weights. We assume there to be no waveguide attenuation.
        
        First, we define the phase term of the feasible DMA weights with a certain weight fill ratio as
		\begin{equation}
			\mathcal{W} = \left\{ e^{-\sfj\frac{\pi}{2}} e^{\sfj \delta} \; : \delta \in [-\xi,\xi] \right\}.
		\end{equation}
		Next, we assume that the channel weights are uniformly sampled on the complex unit circle such that $|h_\nth| = 1, \angle h_\nth \in U(0,2\pi]$. We justify this assumption based on the channel weight formulation. The channel weights sample the complex unit circle based on the array steering vector in \eqref{eq: steering vec} and the DMA waveguide channel in \eqref{eq: dma waveguide channel}. The phase advance for the channel $\phio$ in \eqref{eq: effect channel phase} is nonzero for $\phit \neq \tg$, meaning there will always be a phase advance between DMA elements as $\phio \neq 0$. As the number of elements $\Nt$ increases, the number of samples increases and begins to approach a uniform sampling of the complex unit circle. Then, the proportion of channel weights $h_\nth$ within the feasible DMA weights $\mathcal{W}$ is approximately equal to the normalized arc length of $\mathcal{W}$ as $\frac{\xi}{\pi}$. 
		
		Given this uniform sampling assumption, we define the beamforming gain for the DMA beamformer in three different regions of interest. Since losses due to the amplitude distribution of the DMA weights is incorporated in \eqref{eq: freq select}, we assume a unit-amplitude DMA beamformer given by $\wn \in \mathcal{W}$ to isolate the effects of the phase mismatch between the DMA beamformer and channel.
		\begin{itemize}
			\item $h_\nth \in \mathcal{W} \;$: Here, the effective channel phase lies within the DMA feasible weights, allowing $h_\nth \wn= 1$.
			\item $h_\nth  \not\in \mathcal{W} \;$ and $\mathsf{Re}\{ h_\nth \} \geq 0 \;$: The effective channel phase lies outside the DMA feasible weights leading to a phase mismatch. The effective channel phase will configure the DMA element to map to the outermost weight in the righthand side of the complex plane. The beamforming gain is then $h_\nth \wn= h_\nth e^{-\sfj\frac{\pi}{2}} e^{\sfj \xi}$.
			\item $h_\nth  \not\in \mathcal{W} \;$ and $\mathsf{Re}\{ h_\nth \} < 0 \;$: The effective channel phase lies outside the DMA feasible weights and in the lefthand side of the complex plane. Similar to the prior region, the beamforming gain is then $h_\nth \wn= h_\nth e^{-\sfj\frac{\pi}{2}} e^{-\sfj \xi}$.
		\end{itemize}
		We take the average value of the beamforming gain normalized by the number of elements $\Nt$ for these three regions as the beamforming gain term due to an incomplete weight fill, defined by
        \begin{multline}
			\Wfill(\xi) =   \Biggl| \frac{1}{2\pi} \Biggl\{ \int_{-\frac{\pi}{2}-\xi}^{-\frac{\pi}{2}+\xi}  1dz + \int_{\frac{\pi}{2}+\xi}^{\frac{\pi}{2}} e^{-{\mathsf{j}} \frac{\pi}{2} }  e^{{\mathsf{j}} \xi } e^{-{\mathsf{j}} z }dz  + \\ \int_{-\frac{\pi}{2}-\xi}^{\frac{\pi}{2}} e^{-{\mathsf{j}} \frac{\pi}{2} }  e^{-{\mathsf{j}} \xi } e^{-{\mathsf{j}} z }dz  \Biggr\} \Biggr|^2  = \left| \frac{1}{2\pi} \left( 2\sin (\xi) + 2\xi \right) \right|^2.
		\end{multline}

\section{Proof of Lemma~\ref{lem: 6}}\label{proof: lem6}

We derive an expression for the impact of the waveguide field attenuation on the resulting DMA beamforming gain. The penalty term  $\Aleak(\Lambda)$  from \eqref{eq: bf gain atten} is simplified as
	\begin{align}
		\Aleak(\Lambda) =  \frac{\left| \sum\limits_{\nth=1}^{\Nt} e^{-\atg [k]\dx (\nth-1)}\right|^2}{\Nt\sum\limits_{\nth=1}^{\Nt} \left|e^{-\atg [k]\dx (\nth-1)} \right|^2 } .
	\end{align}
	We find that both the numerator and denominator have the form of a geometric series that is simplified  as
	\begin{equation}\label{eq: S tanh}
		\Aleak(\Lambda) = \frac{\left| \frac{1-e^{-\atg \dx (\Nt-1) }}{1-e^{-\atg \dx }} \right|^2}{\Nt\frac{1-e^{-2\atg \dx (\Nt-1) }}{1-e^{-2\atg \dx }}} = \frac{\tanh \left(\frac{-\atg (\Nt-1) \dx}{2} \right)}{\Nt\tanh(\frac{-\atg \dx}{2})}.
	\end{equation}
	Lastly, we have defined the waveguide attenuation constant $\atg$ to ensure that a specified fraction of radiated power $\Lambda$ is met as $e^{-2\atg(\Nt-1)\dx} = 1-\Lambda$. We can then solve for $\atg$ as $ \atg = -\frac{\ln(1-\Lambda)}{2(\Nt-1)\dx}$. Plugging this back into \eqref{eq: S tanh} gives
	\begin{equation}
		\Aleak(\Lambda) = \frac{\tanh \left(\frac{\ln(1-\Lambda)}{4} \right)}{\Nt\tanh \left(\frac{\ln(1-\Lambda)}{4(\Nt-1)} \right)}.
	\end{equation}
	As $\Nt$ grows larger, we make the approximation that $ \tanh(x) = x + O(x^2), \; x << 1$ and $\frac{\Nt-1}{\Nt} \approx 1$ to obtain the final expression
	\begin{align}\label{eq: approx atten}
		\Aleak(\Lambda)  \approx   \frac{4}{\ln(1-\Lambda)}  \tanh\left(\frac{\ln(1-\Lambda)}{4} \right) .
	\end{align}

\bibliographystyle{IEEEtran}

\bibliography{myrefs_JC}

\end{document}